\newcommand{\At}{\textbf{A}}
\newcommand{\Pt}{\textbf{P}}
\newcommand{\Ft}{\textbf{F}}
\newcommand{\Tt}{\textbf{T}}
\newcommand{\Gt}{\textbf{G}}
\newcommand{\Dt}{\textbf{D}}
\newcommand{\Yt}{\textbf{Y}}
\newcommand{\E}{\mathbf{E}} 
\newcommand{\W}{\phi} 
\theoremstyle{plain}
\newtheorem{theorem}{Theorem}
\newtheorem{lemma}[theorem]{Lemma}
\theoremstyle{definition}
\theoremstyle{remark}
\newtheorem{remark}{Remark}
\begin{document}
\title{The Iterated Prisoner's Dilemma on a Cycle}

\author{Martin Dyer}
\author{ Velumailum Mohanaraj}
\affil{ School of Computing, University of Leeds \\ Leeds LS2 9AS, United Kingdom}

\date{}

\maketitle

\abstract{Pavlov, a well-known strategy in game theory, has been shown to have
some advantages in the Iterated Prisoner's Dilemma (IPD) game. However, this strategy
can be exploited by inveterate defectors. We modify this strategy to mitigate the
exploitation. We call the resulting strategy Rational Pavlov. This has a
parameter $p$ which measures the ``degree of forgiveness'' of the players. We
study the evolution of cooperation in the IPD game, when $n$ players are arranged
in a cycle, and all play this strategy. We examine the effect of varying $p$
on the convergence rate and prove that the
convergence rate is fast, $O(n \log n)$ time, for high values of $p$. We also prove that the convergence rate is exponentially slow in $n$ for small
enough $p$. Our analysis leaves a gap in the range of $p$, but simulations suggest that there is, in fact, a sharp phase transition.}

\medskip
\noindent
\emph{Keywords: Games on graphs, prisoner's dilemma, convergence, Pavlov strategy}

\section{Introduction}
\label{s:introduction}
\subsection{Overview}

The Prisoner's Dilemma (PD) is one of the most famous strategic games in game theory (see, for example, \cite{nisan2007}). This game is widely used
as a prototype for the study of the evolution of cooperation among selfish agents. It has attracted a large amount of interest from researchers in
diverse fields, due to the fact that it represents a very common strategic situation that needs to be understood. Many
real-life problems can be modelled by the Prisoner's Dilemma~\cite{brembs-1996a}.

In the standard form of PD, the payoff obtained when both prisoners cooperate with each other is denoted by $R$, the \emph{reward for mutual
cooperation}. The payoff gained when both defect is denoted by  $P$, the \emph{punishment for mutual defection}. Finally, $T$
(the \emph{temptation to defect})  is earned by the informer and $S$ (the \emph{sucker's payoff}) is earned by the other when one defects
and the other cooperates. These four outcomes are shown in Figure~\ref{tbl:payoff} in a matrix form.  In this game, the payoff of an action does not depend on the player. Hence the game is said to be \emph{symmetric}.

\begin{figure}[ht!]
\centering
\begin{tabular}{cc|c|c|}
\multirow{4}{10mm}{\begin{sideways}\parbox{15mm}{
\textbf{Row Player}}\end{sideways}}& \multicolumn{3}{c}{\textbf{Column Player}} \\
 & \multicolumn{1}{c}{\ }& \multicolumn{1}{c}{Cooperate} & \multicolumn{1}{c}{Defect} \\
\cline{3-4}
& Cooperate & $R=3$ & $S=0$ \\
\cline{3-4} & Defect & $T=5$ & $P=1$\\ \cline{3-4}
\end{tabular}
\caption[The payoff matrix for Prisoner's
Dilemma]{\label{tbl:payoff}The payoff matrix for the Prisoner's
Dilemma with Axelrod's numerical example. The game is symmetric, therefore only the payoffs to the row player are shown. }
\end{figure}

The interesting feature of the PD game is the property that the four payoffs satisfy: $T>R>P>S$.
Hence, in one shot game, it is always best to defect. Thus, self-interest of the players leads to
the payoff $P$ which is worse than the $R$ that both players would get by cooperating, hence the dilemma.
In the Iterated Prisoner's Dilemma (IPD), the same players meet again with a high probability, thus getting an opportunity to punish
each other for any previous non-cooperative moves.  Understandably, the fear of retaliation here is likely to encourage cooperation.
This was studied in \cite{axelrod:cooperation}, which stimulated work in this area.
Another constraint $2R > S+T$ is usually added to the standard form of the IPD~\cite{may1987}. If this constraint is not present, players could benefit from receiving $S$ and $T$ on alternate rounds rather
than $R$ on every round through continuous cooperation.

A great deal of research has been done to find out an ideal strategy for the IPD.
A strategy helps  players decide whether to cooperate or defect in the current
round. A simple strategy called \emph{tit-for-tat} (TFT)
surprisingly won Axelrod's seminal computer tournaments~\cite{axelrod:cooperation}. TFT cooperates on the first round, and
copies what the opponent has done on the previous round thereafter. However, this strategy has two main problems:  firstly,
it is not \emph{evolutionarily stable}~\cite{robert1987,imhof2007574}; and secondly, any mistakes by the agents or any noise in the responses
may cause a misinterpretation leading to irrecoverable retaliation sequences. (Informally, a strategy is evolutionary stable if
a population of players adopting the strategy can not be overrun by any mutant strategy.)

Another well-known strategy is called \emph{Pavlov}. The Pavlov, an exemplar of the \emph{win-stay lose-shift}
strategy,  works as follows. On each iteration of the game, if a Pavlov player's payoff is one of the two smaller payoffs, i.e.\ $P$ or $S$, then he switches his action in the next round of the game, otherwise he keeps the same action.
It is claimed in \cite{imhof2007574, kraines-1993a,nowak-1993a} that the Pavlov performs  better than TFT.  This is due to its ability to recover
from noise and errors and its capability to exploit unconditional cooperators (All-C). However Pavlov has two main issues. Firstly, Pavlov is deterministic, thus it cannot represent uncertainties present in the real world, such as the stochastic nature of biological interactions~\cite{martinsigmund1990}.
Secondly, it fares poorly against all-time defectors (All-D). This is because, when played against All-D, Pavlov is punished for defecting,
so switches to cooperation, just to be punished even more. This is repeated forever, and consequently Pavlov collects the sucker's payoff
($S$) on alternate rounds.

A family of stochastic Pavlovian strategies $\mathcal{P}(k,\ell)$, for a fixed $\ell$ and $ 0 < k <\ell$,
has also been studied and hailed as a near-ideal strategy for the IPD in~\cite{kraines-1993a}.
$\mathcal{P}(k,\ell)$ cooperates with probability $k/\ell$. At the end of each round of the game,
$k$ is increased if the player gains $T$ or $R$, and decreased otherwise.
The advantages of these strategies are: they are adaptive and naturally stochastic.
The disadvantages are:  they take exponential time in $\ell$
for learning to cooperate and are exploitable by All-D. It is worth to mention
that  $\mathcal{P}(1,1)$
is equivalent to the Pavlov strategy described above.

Before we move on, let us represent the Pavlov strategy as a (deterministic) Markov chain.
Suppose two agents play the IPD using Pavlov. This can be
modelled as a Markov chain having four states, each representing a
possible combination of the strategies of the agents. We denote
these states by $\mathbf{++}$,  $\mathbf{+-}$,  $\mathbf{-+}$ and
$\mathbf{--}$. (Here $+$ stands for \emph{cooperation} and $-$ stands
for \emph{defection}.) Thus, $++$, for example, represents the
scenario where both agents  cooperate. The transition diagram for this
process is shown in Figure~\ref{fig:ps}.

\subsection{Rational Pavlov strategies on IPD}

It is now clear that the main weakness of the Pavlov is that it can be exploited by All-D.
Thus we suggest an enhancement to this strategy. We modify it to add randomness.
This, we think, makes the resultant strategy more rational and robust. The details
of the modification are given below.

A Pavlov player cooperates in the current round if both he and his opponent  cooperated or  defected at their previous play. Thus, a
transition from  $--$ to $++$ happens in a single repetition with certainty. We will modify this in two ways, so that the transition
from $--$ to $++$ will only happen with some probability less than 1. More precisely, the modifications introduced to the $--$ to $++$
transition are: if both players defected , i.e.\ in state $--$, in the previous play, then

\begin{enumerate}
\item  each player decides independently whether to cooperate in the current round with probability $p$. The transition diagram of the
strategy obtained after this modification is shown in Figure~\ref{fig:rps}. As we believe that this modification adds some rationality
to Pavlov we call this strategy  Rational Pavlov (RP).
\item both cooperate in the current round with probability $p$. The transition diagram of this strategy is shown in Figure~\ref{fig:srps}. This is a
simplified version of the RP, hence the name Simplified Rational Pavlov (SRP). Even with the absence of communication, players deciding
together with probability $p$ can also be justified using the superrationality  principle~\cite{metamagicaltheme96}. Thus, SRP
might also be expanded as Super Rational Pavlov.

It is noteworthy that both RP and SRP are equivalent when $p=1$ or $p=0$. And, both RP and SRP
reduce to the original Pavlov strategy when $p=1$.
\end{enumerate}

\begin{figure}[ht]
\centering
\subfigure[Pavlov Strategy (PS)]{
\label{fig:ps}
\xymatrix{
&*+[o][F-] {++}\ar@(ul,ur)[]^{1} & \\
*+[o][F-]{-+}\ar[dr]^{1} & & *+[o][F-]{+-}  \ar[dl]_{1} \\
&  *+[o][F-]{--} \ar[uu]_{1} & \\ & \\ }
}
\subfigure[Rational Pavlov (RP)]{
\label{fig:rps}
\xymatrix{
&*+[o][F-] {++}\ar@(ul,ur)[]^{1} & \\
*+[o][F-]{-+}\ar@/^/[dr]^{1} & & *+[o][F-]{+-}  \ar@/_/[dl]_{1} \\
&  *+[o][F-]{--} \ar@(dl,dr)[]_{(1-p)^2}\ar@/_/[ur]_{p(1-p)} \ar@/^/[ul]^{p(1-p)} \ar[uu]_{p^2} &  }
}
\subfigure[Simplified RP (SRP)]{
\label{fig:srps}
\xymatrix{
&*+[o][F-] {++}\ar@(ul,ur)[]^{1} & \\
*+[o][F-]{-+}\ar[dr]^{1} & & *+[o][F-]{+-}  \ar[dl]_{1} \\
&  *+[o][F-]{--} \ar@(dl,dr)[]_{(1-p)}\ar[uu]_{p} &  }
}
\caption[Transition diagrams for RP and SRP]{Transition diagrams of the original and the modified Pavlovian strategies. Here, ``$-$'' represents cooperation and ``$+$'' represents defection. The transition probabilities are shown on the edges. }
\label{fig:statesDiagram}
\end{figure}
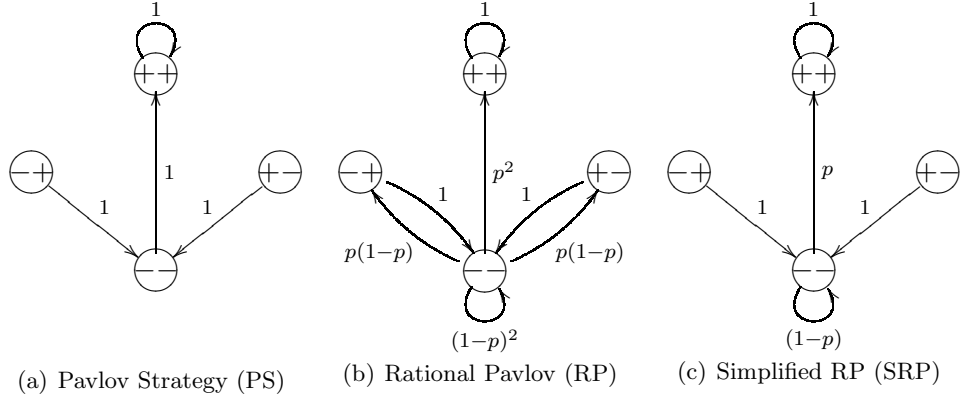

\subsection{Previous work}\label{s:previouswork}

Although our work appears to be the first to formally define strategies
like RP and SRP, there is some evidence in the literature that support
our intuition behind the proposed improvements.
Firstly, the results from experiments with humans in~\cite{CWedekind04021996} overwhelmingly
support this. The results show that humans use a Pavlov-like strategy that is smarter than the classic Pavlov strategy when dealing with All-D. This Pavlov-like strategy cooperates after $--$ state with probability
less than 1,  like RP and SRP do. Not surprisingly, the players using this strategy were more successful than
the others in the experiments. Furthermore, a similar modification has been suggested as a possible improvement
of the Pavlov in~\cite{kraines-1993a}. Finally, a strategy similar to RP and SRP has proved
to be the winner in computer simulations as well~\cite{frean94} .

Apart from reigning supreme in evolutionary game theory,
the Pavlov has been studied in distributed Artificial Intelligence
as a learning model. Shoham and
Tennenholtz~\cite{shoham93colearning} introduced the notion of
\emph{co-learning} where agents try to adapt
to their environment by adapting to one another's behaviour. In the
same paper, they also defined a simple co-learning update rule,
namely \emph{Highest Cumulative Reward} (HCR). This rule states
that an agent should adapt to the action that resulted in favourable
feedback in the latest $\mu$ iterations, where $\mu$ is the memory
size. The HCR update rule ensures that cooperation emerges at the end
in the IPD game. This update rule with $\mu =0$, which is one of the most efficient
memory sizes~\cite{kittock93emergent}, is precisely the Pavlov strategy.

Shoham and Tennenholtz~\cite{shoham93colearning} studied the evolution of cooperation for the HCR
update rule in unstructured population and concluded that it is an impractical  model for the
evolution of cooperation. This conclusion is not surprising, as it is now well known that, in an unstructured
population, natural selection favours defection over cooperation~\cite{Ohtsuki2006}.
Hence, there is a growing interest in studying the evolution of cooperation when the
topology for interactions is not complete (see, for example, \cite{santos2006,Ohtsuki2006}).
Thus we consider the players to be arranged as the vertices of a graph, and they can interact
only along the edges of the graph.
Kittock~\cite{kittock93emergent} studied the effects of an interaction graph on the emergence
of cooperation under the dynamics in which, at every step, two adjacent players
are selected uniformly at random to play the IPD game using the Pavlov strategy. The paper~\cite{kittock93emergent}  presented the
results from an empirical study which shows that the time needed for the emergence of
cooperation in the IPD game is polynomial on cycles and exponential on complete graphs.

Most of the work we have mentioned above is empirical. However the need for rigorous results has rightly
been emphasised. (See, for example, \cite{kittock93emergent,may1987,shoham93colearning}.)
The reason for the lack of rigorous analysis of games on graphs is that it is complicated
due to the vast number of patterns that can be generated~\cite{Nowak2006}.
While the empirical results do give some insights into the evolution, some
of the results are far too complicated to be understood without theoretical
backing. The results obtained
through rigorous analysis are  often more revealing and contribute to a
clearer understanding of the problem . Hence, in this paper, we analyse the behaviour of RP rigorously. More precisely, we establish the conditions
for fast convergence, and determine the rate of convergence to cooperation
when all players play RP. These measures are central to understanding the emergence of cooperation among selfish agents~\cite{robert1987}.

On the theoretical side, Dyer \emph{et al.}~\cite{dyer00convergence}
studied the two cases examined in~\cite{kittock93emergent}, using rigorous analysis.  Mossel and Roch~\cite{mossel2007} did a
similar study for some expander graphs and bounded degree trees and showed that
the convergence is slow in both settings for the Pavlov strategy.
Istrate \emph{et al.}~\cite{Istrate2008} investigated the robustness
of these convergence results under \emph{adversarial scheduling} in which  an adversary selects which players update at every step. Their results show
that if an adversary can specify two players for the update, the game might never
converge. Along this line of work,  we carry out a rigorous analysis of RP in this paper.  In particular, we attempt to find the range of
$p$ that favours fast evolution of cooperation and the range of $p$ that makes the evolution of cooperation exponentially slow, when the IPD is played on the cycle using RP.
(Here, we consider speed of convergence as a function of the number of players, $n$.)  All our results are complemented by simulation results. Our choice of graph, the cycle, is an extreme case, where every player has only two
neighbours. Game dynamics have  previously been analysed for the cycle~\cite{dyer00convergence,kittock93emergent,Ohtsuki2006}.
Our results show some interesting results, for instance, we show that the emergence of convergence is exponentially slow for small values of $p$. Thus a high degree of forgiveness seems necessary for cooperation to emerge. Perhaps, our most important message is that a Rational Pavlov player can reduce
the risk of being exploited without compromising the emergence of cooperation.

We have analysed SRP as well. The analysis is quite similar to that of RP. Therefore,
we do not present it in this paper. Instead, we make some
remarks on the final results under relevant sections.

\subsection{Preliminaries} \label{s:preliminaries}

Much of the notation and terminology used in this paper is adopted from \cite{dyer00convergence}.
We consider $n$ players arranged as the vertices of a cycle graph $G=(V,E)$, where $V = \{0,\ldots, n-1\}$ and
$E = \{\{i, i+1\} \,:\, 0 \leq i \le n-1\}$. Hence, vertex $i$ can interact only with the vertices
$i-1$ and $i+1$. Here and throughout the paper, addition and subtraction on vertices is performed modulo $n$.

The agent at the vertex $i$ $(0  \leq   i  <  n)$ has a state $S_i \in \{-1, 1\}$, where $-1$ represents \textit{defection}
and $1$ represents \textit{cooperation}. We will denote the cooperator-states, or $1$'s, also as $+$'s (pluses), and
the defector-states, or $-1$'s, as $-$'s (minuses). Each edge of the graph has a state which is determined by the states
of its end vertices. Thus, an edge of the graph might be in any of four states, $--,-+,+-,++$, as shown in the state transition
diagrams in Figure~\ref{fig:statesDiagram}.

In this study, the game is played in the following way. At each stage, an edge of the cycle is selected uniformly at random.
The agents connected by this edge play the game using RP and update their strategies accordingly. In this process, emergence of
cooperation means reaching the state where everyone cooperates, in other words, reaching the state $S^*$ with $S_i = 1$
for all $ i \in V$. The state $S^*$ is the unique absorbing state of this process.

We will use the following terminology. Let $S \in \{-1, 1\}^V$ be given. A \emph{plus-run} (resp. minus-run) in $S$ is
an interval $[i, j]$  where $0 \leq i, j < n$, such that $S_k = 1$ (resp. $-1$) for $i\leq k \leq j$ and $S_{i - 1} = -1$
(resp. 1), $S_{j + 1} = -1$ (resp. 1). (It is possible to have $j \le i$, since we are working modulo $n$.) Clearly all runs are disjoint. The length of a minus-run $R_\textrm{d}$, denoted by $\ell(R_\textrm{d})$, equals the number of minuses in the run.
We will refer to a minus-run of length $\ell$ as an $\ell_\textrm{d}$-run where the subscript ``d'' stands for defectors.
We use similar variables for a plus-run with subscript ``c'', which stands for cooperators.

We now give some definitions for minus-runs, which are equally applicable to plus-runs if the signs
are changed, and the subscript c is used. A $1_\textrm{d}$-run is also called a \emph{singleton} minus,
and a $2_\textrm{d}$-run is also called a \emph{pair} of minuses. There are two \emph{outer rim} edges associated with
a minus-run $R_\textrm{d} = [i,j]$, namely $\{i - 1, i\}$ and $\{j, j+1\}$. The all-minuses configuration is not a run as we have defined it,
since it has no bordering pluses, we will nevertheless refer to it as the $n_\textrm{d}$-run.

Finally, the parameter of both RP and SRP will be denoted by $p$, but the context should always make the meaning clear.
The following theorems summarise our results.
\medskip
\begin{theorem}
\label{thm:fastconvRP}
Suppose $n$ players, arranged as the vertices of a cycle, play the
\emph{IPD} game using Rational Pavlov \emph{(RP)} with
parameter $p \ge 0.870$. Then, there is a constant $\omega>0$ such that,
the probability that the all-cooperate
state is not reached in time
\[\dfrac{1}{\omega}n \log \left( \dfrac{n}{\varepsilon} \right)\]
is at most $\varepsilon$, for any $\varepsilon>0$.
\end{theorem}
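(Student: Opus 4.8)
The plan is to produce a nonnegative, integer-valued potential $\Phi$ on configurations that vanishes precisely at the all-cooperate state $S^*$, and to prove that a single RP step contracts it multiplicatively in expectation: $\E[\Phi(S_{t+1})\mid S_t]\le(1-\omega/n)\,\Phi(S_t)$ for some constant $\omega>0$ whenever $p\ge0.870$. Granting this, iterating gives $\E[\Phi(S_t)]\le(1-\omega/n)^t\,\Phi(S_0)\le\Phi(S_0)\,e^{-\omega t/n}$; since $\Phi(S_0)=O(n)$ (the weights below grow at most linearly in run length) and $\Phi$ is a positive integer off $S^*$, Markov's inequality yields $\Pr[S_t\ne S^*]=\Pr[\Phi(S_t)\ge1]\le\E[\Phi(S_t)]\le\varepsilon$ once $t=\frac1\omega\,n\log(n/\varepsilon)$, after absorbing the $O(1)$ factor from $\Phi(S_0)$ into $\omega$. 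Thus the theorem reduces entirely to exhibiting a potential with this one-step contraction.

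To compute the drift I would condition on the uniformly chosen edge and read the resulting local transition from Figure~\ref{fig:rps}. Let $M$ be the number of defectors, $r$ the number of minus-runs, $B=2r$ the number of boundary edges, and $c=M-r$ the number of internal $--$ edges. A direct count gives $\E[\Delta M\mid S]=\frac1n(B-2pc)=\frac2n\big((1+p)r-pM\big)$: each boundary edge turns into $--$ and raises $M$ by one, while each $--$ edge lowers $M$ by $2p$ in expectation, since both endpoints cooperate with probability $p$. The change in $r$ is driven by three local events whose rates I would tabulate from the run-length profile: annihilation of a singleton cooperator (pattern $-\!+\!-$, which collapses deterministically when either rim edge is chosen), annihilation of a pair of defectors (pattern $+\!-\!-\!+$, which fills in with probability $p^2$), and nucleation of a cooperator inside a defector run (an internal $--$ edge flipping, which splits the run).

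The difficulty is that neither $M$ nor $r$ is itself a supermartingale, so a linear potential cannot work. A short defector run grows rather than shrinks: an isolated singleton has $\E[\Delta M]=2/n>0$ and an isolated pair has $\E[\Delta M]=2(1-p)/n>0$, because their two rim edges deterministically extend them while they carry too few internal $--$ edges to compensate; only runs of length at least three contract, and only once $p>\tfrac12$. The potential must therefore weight runs nonlinearly, charging short runs enough that their lengthening is prepaid by the annihilation they eventually enable at length two; accordingly I would set $\Phi=\sum_{\text{minus-runs}}w(\ell_i)$ for an increasing $w$ with $w(0)=0$ and at-most-linear growth, and fix $w$ by demanding $\E[\Delta\Phi\mid S]\le-\omega\Phi/n$ uniformly. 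The main obstacle is verifying this single inequality over all configurations at once, because the defector runs are not independent: an internal flip splits a run and deposits a cooperator island, raising the run count, and that island later shrinks and can re-merge the two runs, restoring it, so $w$ must be shaped to keep both the split and the re-merge non-increasing while still forcing a strict decrease from the pair- and singleton-annihilation events. Balancing these opposing demands is exactly what fixes the threshold: the drift inequality is tight at one extremal run pattern, and solving the resulting scalar condition for the largest admissible $p$ is where the value $0.870$, together with the constant $\omega$, comes from. With that inequality in hand for $p\ge0.870$, the reduction of the first paragraph completes the proof.
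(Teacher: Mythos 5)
Your architecture is the paper's own: a potential $\W(S)=\sum_\ell w_\ell r_\ell$ over minus-runs, a one-step contraction $\E[\W(S_{t+1})\mid S_t]\le(1-\omega/n)\W(S_t)$, at-most-linear weights giving $\W(S_0)\le n$, and Markov's inequality using $\W\ge w_1=1$ off $S^*$ (your insistence on integer values is unnecessary, and the paper's weights are in fact not integers, e.g.\ $\hat w_2=1-\omega/2$). Your local drift computations are also correct: $\E[\Delta M\mid S]=\tfrac2n\bigl((1+p)r-pM\bigr)$, singletons and pairs of defectors grow in expectation, and an $\ell_\textrm{d}$-run contracts only when $p(\ell-1)>1$, which is exactly why a linear potential fails and the weights must be shaped nonlinearly. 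But the entire content of the theorem is the \emph{existence} of admissible weights for $p\ge 0.870$, and your proposal stops precisely there: ``fix $w$ by demanding $\E[\Delta\Phi\mid S]\le-\omega\Phi/n$ uniformly'' restates the problem rather than solving it. You never exhibit $w$, so nothing is proved for any specific $p$.

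Moreover, your account of where $0.870$ comes from --- ``the drift inequality is tight at one extremal run pattern, and solving the resulting scalar condition for the largest admissible $p$'' --- does not match what is actually required. The constraints form an infinite family, one per run length (inequality (\ref{e:shortequation})), together with the merge subadditivity $w_{\ell_1}+w_{\ell_2}\ge w_{\ell_1+\ell_2}$ of (\ref{e:singleplus}) and the separate all-minus constraint (\ref{e:allminus}); no single extremal configuration is binding. The paper instead solves the length-$\ell$ constraints with equality to obtain the recurrence (\ref{e:recurrence}) with seeds $\hat w_0=0$, $\hat w_1=1$, and the threshold emerges as the condition that $g(\ell)=\hat w_\ell/\ell$ attain a positive local minimum at a finite $\ell_0$: Lemma~\ref{lem:minexists} computes $h(\ell)=g(\ell+1)-g(\ell)$ as explicit polynomials in $p$ and finds numerically that $h(8)>0$ first when $p\ge 0.870$, giving $\ell_0\le 8$. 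The weights are then spliced to the linear tail $w_\ell=\alpha\ell$ with $\alpha=g(\ell_0)$, and Lemmas~\ref{lem:wellIncreasing}--\ref{lem:singleplus} verify all constraints --- including the delicate check in Lemma~\ref{lem:weightincrease} that the splice overshoots the recurrence constraint by at most one index ($\ell_1^*-\ell_0\le 1$), which itself needs monotonicity of $\hat w_\ell$. For $p$ below the threshold $g(\ell)$ decreases monotonically and no such splice exists, so the mechanism fixing $0.870$ is a global property of the recurrence solution, not a scalar balance at one pattern. Without this construction (or a genuine substitute), your proposal reduces the theorem to the assertion that a suitable potential exists --- which is the theorem.
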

\medskip
\begin{theorem}
\label{thm:slowConRP}
Suppose $n$ players, arranged as the vertices of a cycle, play the
\emph{IPD} game using Rational Pavlov \emph{(RP)} with
parameter $p$. Suppose all players play \emph{defect} when the game is started.
Then there exists a constant $p_1 > 0$ such that, for all $p \leq p_1$, it takes time
exponential in $n$ for the all-cooperate state to be reached, except for
 probability exponentially small in $n$.
\end{theorem}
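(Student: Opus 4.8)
The plan is to prove the exponential lower bound by the method of a weighted (exponential) supermartingale, after first reading the update rule of Figure~\ref{fig:rps} in terms of runs. The key structural fact is an asymmetry. A plus-run can never lengthen: when an outer-rim edge of a plus-run is chosen it passes through $+-$ or $-+$ to $--$, shortening the plus-run by one with no factor of $p$, and a plus-run vanishes (merging its neighbours) only once the minus-run beside it is destroyed. A minus-run, by contrast, grows by one at either rim whenever that rim edge is chosen, again with no factor of $p$, and can lose length only when one of its internal $--$ edges is selected \emph{and} the players actually switch, which carries a factor at most $p$. Thus for small $p$ the defectors spread essentially for free while cooperators must be paid for, and this asymmetry is the engine of slow convergence.

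Writing $N_-$ for the number of defectors, $N_+=n-N_-$ for the number of cooperators, and $r$ for the common number of minus-runs and plus-runs, a count of the four edge types (there are $2r$ rim edges, $N_--r$ internal $--$ edges and $N_+-r$ internal $++$ edges) gives
\[
\mathbf{E}[\,\Delta N_+ \mid S\,]\;=\;\frac{2}{n}\bigl(p\,N_- - (1+p)\,r\bigr).
\]
Hence the drift of $N_+$ is negative exactly when $r>\tfrac{p}{1+p}N_-$, i.e.\ when the minus-runs are numerous and short (average length below $(1+p)/p$). Balancing the rate at which cooperators are created against the rate at which the rims destroy them then locates a quasi-equilibrium in which the defector density is $1-O(p)$ and the $\Theta(pn)$ cooperators sit in short, non-growing plus-runs. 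Starting from all-defect the process relaxes to this quasi-equilibrium, and reaching all-cooperate requires pushing $N_+$ from $\Theta(pn)$ up to $n$, a change of $\Omega(n)$, against the restoring drift.

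To convert this into a bound on the hitting time I would use a weighted potential. For a constant $\mu>1$,
\[
\mathbf{E}[\,\mu^{\Delta N_+}\mid S\,]\;=\;\frac{1}{n}\Bigl[\,2r\mu^{-1} + (N_- - r)\,(p\mu+1-p)^2 + (N_+ - r)\,\Bigr],
\]
using $(p\mu+1-p)^2=p^2\mu^2+2p(1-p)\mu+(1-p)^2$; this is at most $1$, so that $\mu^{N_+}$ is a supermartingale, precisely on the many-run region $r\ge\kappa(p,\mu)\,N_-$, and choosing $\mu>1$ and $p\le p_1$ small makes $\kappa$ small enough that the quasi-equilibrium lies safely inside it. On that region the supermartingale maximal inequality bounds the probability that $N_+$ ever climbs from a small threshold $\theta n$ to $n$ by $\mu^{-(1-\theta)n}$, and a union bound over $e^{cn}$ steps, with $c$ chosen below the decay exponent, yields the stated exponential time with exponentially small failure probability.

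The main obstacle is exactly the sign change of the drift: $\mu^{N_+}$ is a \emph{sub}martingale wherever $r$ is small---at all-defect itself $\mathbf{E}[\mu^{\Delta N_+}]=(p\mu+1-p)^2>1$---so the previous step cannot be applied verbatim from the start, and, worse, since plus-runs cannot grow, the only route to all-cooperate is for plus-runs to merge, which \emph{decreases} $r$ and so must leave the region on which the potential is well behaved. The heart of the proof is therefore to control the joint excursion of $(N_+,r)$ rather than either coordinate alone: one must show that the low-run, high-cooperator corner (which does flow quickly to all-cooperate) is itself reachable only through an exponentially unlikely excursion, because lowering $r$ while keeping $N_+$ large forces the surviving minus-runs first to grow, paying the rim cost against a positive restoring drift. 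I expect to handle this either by engineering a two-variable potential $\Phi(N_+,r)$---adding a penalty $-\beta r$ and computing $\mathbf{E}[\Delta r\mid S]$ through the tedious case analysis of singleton and length-two runs---so that $\Phi$ becomes a genuine supermartingale on every non-absorbing state with an $\Omega(n)$ gap up to all-cooperate, or by replacing the potential argument with a conductance/bottleneck cut separating the quasi-equilibrium from all-cooperate, whose crossing probability is bounded by accumulating the factors of $p$ paid by the internal switches that any such crossing necessarily requires.
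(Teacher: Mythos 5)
You have the one-step computations right---the edge-type counts, $\mathbf{E}[\Delta N_+\mid S]=\frac{2}{n}\bigl(pN_- -(1+p)r\bigr)$, and the factor $(p\mu+1-p)^2$ for a $--$ edge are all correct---but the proof has a genuine gap at exactly the point you yourself call its ``heart,'' and that gap is not closed. First, a quantitative error in the supermartingale region: the condition $\mathbf{E}[\mu^{\Delta N_+}\mid S]\le 1$ works out to $r\ge \kappa(p,\mu)N_-$ with $\kappa=\frac{\epsilon}{2(1-\mu^{-1})+\epsilon}$, $\epsilon=(p\mu+1-p)^2-1$, which is $\approx p\mu$ to leading order in $p$; but the zero-drift locus of $N_+$ is $r=\frac{p}{1+p}N_- < p\mu N_-$, and by Jensen $\mathbf{E}[\mu^{\Delta N_+}]>1$ wherever the drift vanishes. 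So for any fixed $\mu>1$ the quasi-equilibrium lies \emph{outside} the supermartingale region, contrary to your claim that taking $p\le p_1$ small places it ``safely inside''; one would need $\mu-1=O(p)$ and a margin analysis at order $p^2$, which you do not attempt. Second, and more seriously, states with small $r$ and large $N_+$ (long plus-runs facing long minus-runs) lie between your thresholds $\theta n$ and $n$, and on them $\mu^{N_+}$ is a submartingale, so the maximal inequality cannot be applied along the paths that matter. Your two proposed repairs---a two-variable potential $\Phi(N_+,r)$ with penalty $-\beta r$, or a conductance/bottleneck cut---are only named: there is no construction of $\Phi$, no verification of a supermartingale inequality for it on all non-absorbing states, and no bound on the cut-crossing probability. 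Since controlling that joint $(N_+,r)$ excursion is precisely where the difficulty of the theorem lives, the argument as written does not prove the statement. A smaller but real error: the opening claim that ``a plus-run can never lengthen'' is false---a plus-run lengthens by one or two when an adjacent $--$ edge fires (probabilities $2p(1-p)$ and $p^2$), and two plus-runs separated by two minuses merge with probability $p^2$; your drift formula silently includes these events, but the structural narrative built on the false claim cannot stand as stated.

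For contrast, the paper avoids any global potential and works locally: it bounds $P^t_\ell$, the probability that a plus-run of length $\ell$ starts at a fixed position (conditioned on a minus to its left, which decouples neighbouring runs), derives difference equations for these quantities, rescales $\tau=t/n$ to a system of ODEs solved perturbatively to order $p^2$ for short runs, and then shows via the generating function $F(x,t)$ evaluated at $x=1+p^3$ that $P^\tau_\ell\le \gamma(1+p^3)^{-\ell}$ uniformly in time; a union bound over $n$ positions and $T$ steps shows a plus-run of length $n/4$ needs exponential time, which yields the theorem. If you wish to pursue your global route, the missing lemma you must actually prove is either a uniform-in-time geometric tail on plus-run lengths (which is what the paper establishes) or an executed two-variable Lyapunov construction; as it stands your proposal identifies the obstacle honestly but does not overcome it.
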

\medskip
\begin{theorem}
\label{thm:defection}
Suppose $n$ players, arranged as the vertices of a cycle, play the
\emph{IPD} game using Rational Pavlov \emph{(RP)} with $p=0$. Provided there is at least one defector on the cycle at
the beginning of the game, the game converges to defection in time $T_n$ where $T_n$ lies within the range $\Bigl[ \frac{n(n-1)}{2} \pm   O\bigl(n^{\frac{3}{2}}\log n\bigr) \Bigr]$ with high probability.
\end{theorem}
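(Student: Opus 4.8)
The plan is to first pin down the $p=0$ dynamics and then reduce the convergence time to a first-passage problem for a single coin-flipping process. When $p=0$ the state $--$ is absorbing on each edge, so a cooperator can never be created; the only non-trivial move is that selecting a mixed edge ($+-$ or $-+$) flips its cooperating endpoint to a defector (the suckered player shifts, the tempted player stays). Hence the number of cooperators is non-increasing, the set of defectors is a union of arcs that only grows, and the all-defect state is reached precisely when the last cooperator has been flipped. Any configuration with at least one defector and at least one cooperator has a mixed (boundary) edge, which is chosen with positive probability, so the all-defect state is reached with probability $1$ and $T_n$ is well defined.

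The key structural observation is that the number of boundary edges equals twice the number of plus-runs, and this quantity is non-increasing since plus-runs only shrink and vanish, never split or appear. I would treat the single-defector start as the canonical and extremal case: there is then exactly one plus-run and one defector-arc throughout, so there are always exactly two boundary edges until the final cooperator disappears. Writing $q=2/n$, each step is productive (flips a cooperator) independently with probability $q$, and $T_n$ is exactly the first-passage time to accumulate $n-1$ productive steps. Equivalently, with $X_t \sim \mathrm{Binomial}(t,q)$ the number of productive steps in the first $t$ rounds, $\{T_n \le t\} = \{X_t \ge n-1\}$. This gives $\mathbf{E}[T_n] = (n-1)/q = \tfrac{n(n-1)}{2}$, matching the leading term, and variance of order $n^3$, so fluctuations live on the scale $n^{3/2}$.

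For the concentration I would apply Chernoff bounds to $X_t$ at the two endpoints $t_\pm = \tfrac{n(n-1)}{2} \pm c\,n^{3/2}\log n$. At $t_+$ we have $\mathbf{E}[X_{t_+}] = (n-1) + 2c\sqrt{n}\,\log n$, so $\{T_n > t_+\} = \{X_{t_+} < n-1\}$ is a deviation of order $\sqrt{n}\,\log n$ below a mean of order $n$; since the variance is also of order $n$, this is $\Theta(\log n)$ standard deviations and Chernoff gives failure probability $n^{-\Omega(\log n)}$. The symmetric estimate at $t_-$ controls $\{T_n < t_-\} = \{X_{t_-} \ge n-1\}$ in the same way, so a suitable $c$ places $T_n$ in the stated window except with probability exponentially small in $n$. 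For a general starting configuration the bound (boundary edges)$\,\ge 2$ still holds until absorption, so the same coupling yields the upper tail $T_n \le \tfrac{n(n-1)}{2} + O(n^{3/2}\log n)$; the single-defector start is the slowest, which is what makes the two-sided window tight.

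The step I expect to be the main obstacle is getting the concentration right. The per-move waiting times are geometric with mean $n/2$ and are far from sub-Gaussian, so a naive Hoeffding-type estimate on the sum of $n-1$ geometrics will not deliver the $n^{3/2}\log n$ window; the reformulation as a fixed-horizon Binomial count $X_t$ and Chernoff on that is what makes both tails genuinely exponential and pins the window at $n^{3/2}\log n$. A secondary point needing care is the bookkeeping that the number of boundary edges is exactly $2$ throughout the single-defector evolution (and $\ge 2$ in general), including the final step, where a single remaining cooperator has two incident boundary edges, so that the per-round success probability is exactly $q=2/n$ up to the $(n-1)$-th productive move.
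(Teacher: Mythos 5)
Your proposal is correct, and its core reduction is the same as the paper's: the worst case is a single initial defector, the configuration always has exactly two boundary (mixed) edges until absorption, each step flips a cooperator with probability exactly $2/n$, and $T_n$ is the time to accumulate $n-1$ such successes, giving $\E[T_n]=\frac{n(n-1)}{2}$. Where you genuinely diverge is the concentration step. The paper writes $T_n=\sum_{i=1}^{n-1}T_i$ with $T_i$ geometric of mean $n/2$, truncates each $T_i$ at $\beta n\log n/2$ (each truncation failing with probability at most $n^{-\beta}$), rescales to $[0,1]$-valued variables $Y_i$, and applies a Chernoff bound to the truncated sum; strictly this requires a union bound over the $n-1$ truncation events, a point the paper glosses over, and it yields a polynomially small failure probability. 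You instead use the negative-binomial/binomial duality $\{T_n\le t\}=\{X_t\ge n-1\}$ with $X_t\sim\mathrm{Binomial}(t,2/n)$ and apply Chernoff to $X_t$ at the fixed horizons $t_\pm=\frac{n(n-1)}{2}\pm c\,n^{3/2}\log n$; this handles both tails symmetrically, needs no truncation, and is arguably the cleaner route --- indeed your remark that naive Hoeffding on the raw geometrics fails is precisely the difficulty the paper's truncation is there to repair. Two small corrections: your failure probability is $n^{-\Omega(\log n)}=e^{-\Omega(\log^2 n)}$, which is quasi-polynomial rather than ``exponentially small in $n$'' as you assert at one point (it still comfortably gives the theorem's high-probability claim); and the extremality of the single-defector start, which both you and the paper assert without detail, deserves the one-line coupling you hint at (with $k\ge 1$ plus-runs the productive probability is $2k/n\ge 2/n$ and fewer than $n-1$ flips are needed, so the general absorption time is stochastically dominated by the single-defector $T_n$).
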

\medskip
\begin{remark}
In this paper, an event $Y_n$ which depends on the size of the graph $n$ is said to happen
\emph{ with high probability}, or in short \emph{w.h.p.}, to mean that $\Pr(Y_n) \to 1$ as $n \to \infty $.
\end{remark}
\medskip
The outline of the rest of this paper is as follows: In Section~\ref{s:fastconvergence}, we derive the conditions for fast convergence to cooperation
when RP is used on the cycle.  In Section~\ref{s:slowconvergence}, we prove that convergence to cooperation is slow for small
values of $p$. Section~\ref{s:defection}
concentrates on a special case where defection emerges fast on the cycle. Experimental results are presented in Section~\ref{s:simulation}. Finally,  Section~\ref{s:conclusion} presents our concluding remarks.

\section{Fast convergence on the cycle}
\label{s:fastconvergence}

The convergence rate of the IPD has been analysed in \cite{dyer00convergence} by finding a nonnegative integer-valued potential
function $\xi:\{-1, 1\}^V \rightarrow \mathbb{R}$ such that $\xi(S) = 0$ when $S = S^*$ and $\xi(S) > 0$  otherwise.
Then, Dyer \emph{et al.}~\cite{dyer00convergence} proved that the expectation of the function $\xi$,  which measures the distance from the absorbing state $S^*$ to any given state $S$, decreases with
non-null probability till the absorbing state is reached. We use a similar approach here, but with a simpler potential
function $\W(S)$. This function is defined as

\begin{equation}
\label{e:potentialfunction}
\begin{aligned}
    \W(S) & = \sum_{\ell=1}^n w_{\ell}r_{\ell} ,
\end{aligned}
\quad\text{where}\quad
 \begin{aligned}
    & r_{\ell} \text{ is the number of $\ell_\textrm{d}$-runs, and }  \\
    & w_{\ell} > 0 \text{ is the weight of an $\ell_\textrm{d}$-run.}
\end{aligned}
\end{equation}

Note that $\W(S)  = 0 \text{ when } S=S^{*} \text{ and } \W(S)  > 0$ otherwise.
In this section, we prove Theorem~\ref{thm:fastconvRP} which shows that the emergence of cooperation is fast when using RP in the IPD for high values of $p$. This is done by
studying the changes in the total weight of the minus-runs. Hence, in this section, a run means a run of minuses unless otherwise stated.

\subsection{Analysis}
\label{s:RPSProbFormulate}

We first consider the minus-runs that are separated from their adjacent runs by at least two pluses. When two minus-runs are separated
by a singleton plus, choosing the outer rim edges of the singleton plus causes the two runs to
merge together. This case therefore needs some special consideration and is addressed at  the end of this section.

We need to show that the expectation of $\W$ decreases after every iteration of the game. This requirement can be modelled by having a constraint that the expected total weight of the runs  created by hitting an overlapping
edge of an $\ell_\textrm{d}$-run ($\ell =1,2,\ldots,n$), denoted  by $\E [s_{\ell}]$, is strictly less than the original weight $w_\ell$.
We will now consider runs of different lengths in turn, and find the corresponding constraint.

\smallskip
\noindent
\textbf{A $1_\textrm{d}$-run.}
For a $1_\textrm{d}$-run, there are only two edges which overlap this run. Choosing either of these edges will produce
a $2_\textrm{d}$-run. Therefore, the $1_\textrm{d}$-run can be handled by adding the following constraint to the formulation.
\begin{equation*}
\E[s_1] = \frac{1}{n} \big( 2 w_2 + (n-2) w_1 \big) \le (1-\delta)w_1, \\
\end{equation*}
for small $\delta >0$. Let $\delta = \omega /n$. Thus we obtain
\begin{equation}
\label{e:singleton}
2  w_2 - (2-\omega)  w_1 \le 0\ .
\end{equation}

\smallskip
\noindent
\textbf{An $\ell_\textrm{d}$-run, where $ 2 \leq \ell \leq n-1$.}
There are $\ell+1$ edges which overlap this run. Two of them are outer rim
edges, and selecting either for the update causes the run to grow in length by 1.
All other $\ell-1$ overlapping edges are in state $--$. Let us number these
edges $1,2, \ldots, \ell-1$. According to the strategy RP, if the edge $i \in \{1,2, \ldots, \ell-1\}$
is chosen for the play, this edge will become $++$ with probability $p^2$, producing a ($i-1, \ell-i-1$)-split.
Similarly, this edge might go to the state $+-$ or $-+$ with a probability of $p(1-p)$, resulting in a $(i-1, \ell-i)$-split
or a $(i, \ell-i-1)$-split respectively. The edge might also remain in the same state with probability $(1-p)^2$.
Finally, there is a chance of not hitting any of the overlapping edges of the run, leaving the $\ell_\textrm{d}$-run intact.
We can now compute the expected new weight of the run after one step of the game, by combining these cases.
Hence we have
\begin{equation*}
\begin{split}
&\E[s_{\ell}] =  \frac{1}{n} \biggl( 2w_{\ell+1} + p^2  \sum_{i=1}^{\ell-1}(w_{i-1} + w_{\ell-i-1}) + p(1-p) \sum_{i=1}^{\ell-1}(w_{i-1} + w_{\ell-i}) + \\
& p(1-p)\sum_{i=1}^{\ell-1}(w_i + w_{\ell-i-1}) + (1-p)^2 (\ell-1)w_{\ell} \biggr) + \frac{n-(\ell+1)}{n} w_{\ell} \le (1-\delta)w_{\ell}\ .
\end{split}
\end{equation*}
This inequality can be simplified to
\begin{equation*}
\begin{split}
2w_{\ell+1} + p^2  \sum_{i=1}^{\ell-1}(w_{i-1} + w_{\ell-i-1}) + p(1-p) \sum_{i=1}^{\ell-1}(w_{i-1} + w_{\ell-i}) & + \\
p(1-p)\sum_{i=1}^{\ell-1}(w_i + w_{\ell-i-1}) + (1-p)^2 (\ell-1) w_{\ell} &\le (\ell+1-\omega)w_{\ell}\ .
\end{split}
\end{equation*}
Hence, we have
\begin{equation*}
2w_{\ell+1} + 2p^2  \sum_{i=0}^{\ell-2}w_i + 2p(1-p) \biggl( \sum_{i=0}^{\ell-2}w_i + \sum_{i=1}^{\ell-1}w_i \biggr) + (1-p)^2 (\ell-1) w_{\ell} \le (\ell+1-\omega)w_{\ell}\ .
\end{equation*}
Thus,
\begin{equation*}
\begin{split}
2w_{\ell+1} + 2p^2  \sum_{i=0}^{\ell-2}w_i + 2p(1-p) \biggl(2\sum_{i=0}^{\ell-2}w_i -w_0 + w_{\ell-1} \biggr) &+ \\
(1-p)^2 (\ell-1)w_{\ell} &\le (\ell+1-\omega)w_{\ell}\ .
\end{split}
\end{equation*}
Let $w_0 = 0$. Then, for $ 2 \leq \ell \leq n-1$, we have
\begin{equation}
\label{e:shortequation}
2w_{\ell+1} + 2p(2-p)  \sum_{i=0}^{\ell-2}w_i + 2p(1-p)w_{\ell-1} + \bigl(\ell(p^2-2p) -(p^2 - 2p +2)+\omega\bigr) w_{\ell} \le 0\ .
\end{equation}

 \smallskip
\noindent
\textbf{The $n_\textrm{d}$-run.}  For the $n_\textrm{d}$-run, choosing any edge will cause the run to decrease in length by 2 with the probability $p^2$,
to decrease in length by 1 with probability $2p(1-p)$, and to remain the same with probability $(1-p)^2$. Thus we obtain
\begin{equation*}
\frac{1}{n}\big( p^2 w_{n -2} + 2p(1-p)w_{n-1} + (1-p)^2w_n \big)n \le (1-\delta)w_n\ .
\end{equation*}

Simplifying this inequality yields
\begin{equation}
\label{e:allminus}
 p^2 w_{n -2} + 2p(1-p)w_{n-1} + (p^2-2p+ \delta)w_n \le 0\ .
\end{equation}

Finally, consider the case where two adjacent runs are separated by a singleton plus.
Suppose the lengths of these runs are $(\ell_1-1)$ and $\ell_2$. If we delete the singleton
plus which separates them, a run of length $\ell_1+\ell_2$ is created. Let us
count this as two runs of length $\ell_1$ and $\ell_2$. In other words, we calculate the
resulting weight as $w_{\ell_1}+ w_{\ell_2}$ whereas the true weight is $w_{\ell_1+\ell_2}$.
We need to know that this underestimates the true cost. This can be done by adding the inequalities
\begin{equation}
\label{e:singleplus}
    w_{\ell_1} + w_{\ell_2} \ge w_{\ell_1 + \ell_2}\ .
\end{equation}

\subsection{Determining the weights}

We now show that we can find appropriate values for the weights $w_\ell$ satisfying
inequalities (\ref{e:singleton}) to  (\ref{e:singleplus}). This will imply that the expectation of
the total weight of the runs in a cycle decreases in expectation after every iteration of the game,
leading to a fast (polynomial) convergence rate.
We also determine a range of $p$ favouring fast convergence.

Solving for $w_{\ell+1}$ in inequality \eqref{e:shortequation} gives the recurrence
\begin{equation}
\label{e:recurrence}
\hat{w}_{\ell+1} = - p(2-p)  \sum_{i=0}^{\ell-2}\hat{w}_i - p(1-p)\hat{w}_{\ell-1} - \frac{1}{2} \bigl(\ell(p^2-2p)
- (p^2 - 2p +2) + \delta n\bigr)  \hat{w}_{\ell}\ ,
\end{equation}
for  $2 \leq \ell \leq n-1$. And, from (\ref{e:singleton}), we have
\begin{equation}
\label{e:recurrencePair}
   \hat{w}_2 = \bigl(1-\tfrac{1}{2}\omega\bigr) \hat{w}_1\ .
\end{equation}

Define $g(\ell)$ by
\begin{equation*}
g(\ell) = \dfrac{\hat{w}_\ell}{\ell}\ .
\end{equation*}
\begin{figure}
\centering
\subfigure[  When $p \ge 0.870$, as $\ell$ increases, $g(\ell)$ decreases initially and then increases exponentially.
 When $p \le 0.869$, $g(\ell)$ decreases continuously. ] {
\label{fig:curveRPS}
\psfrag{ylabel}{\begin{tiny}$g(\ell) \times 10^6$\end{tiny}}
\psfrag{xlabel}{\begin{tiny}$\ell$\end{tiny}}
\includegraphics[width=55mm]{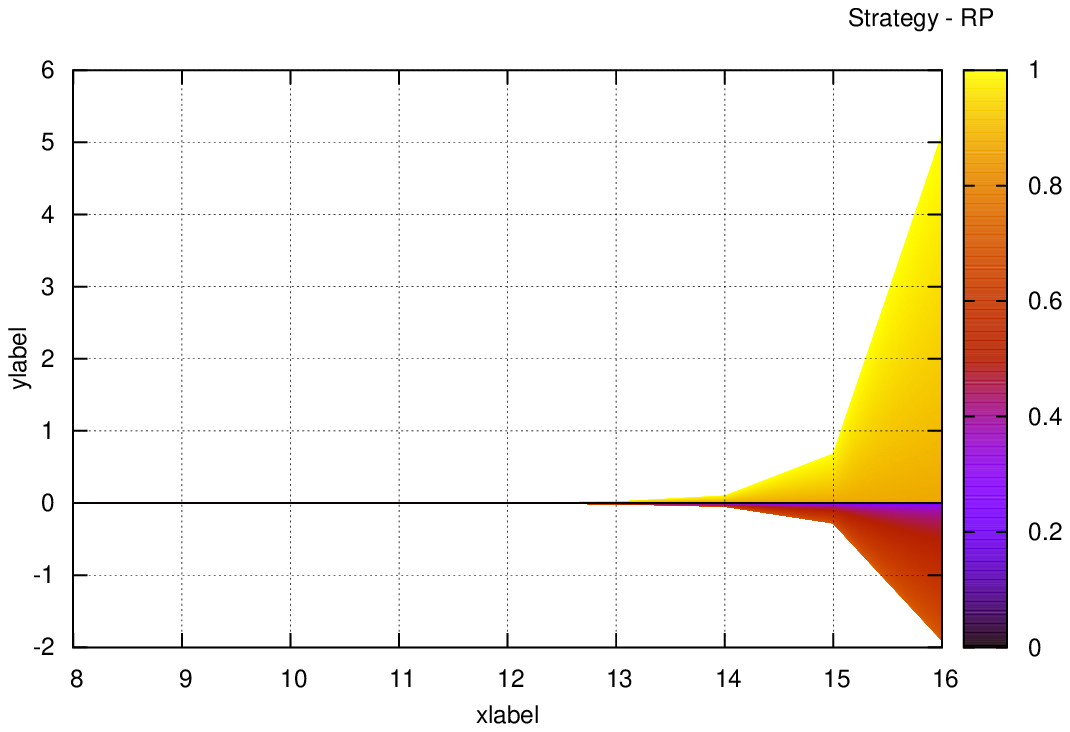}}%
\hspace{1cm}
\subfigure[When $p$ is less than $0.870$, no local minimum was detected,  thus no valid value of $\ell_0$ shown in the figure.
It can be noted that $\ell_0 \le 8$ when $p \ge 0.870$.] {
\label{fig:curveRPSL0}
\psfrag{ylabel}{\begin{tiny}$\ell_0$\end{tiny}}
\psfrag{xlabel}{\begin{tiny}$p$\end{tiny}}
\includegraphics[width=35mm, origin=br,angle=270]{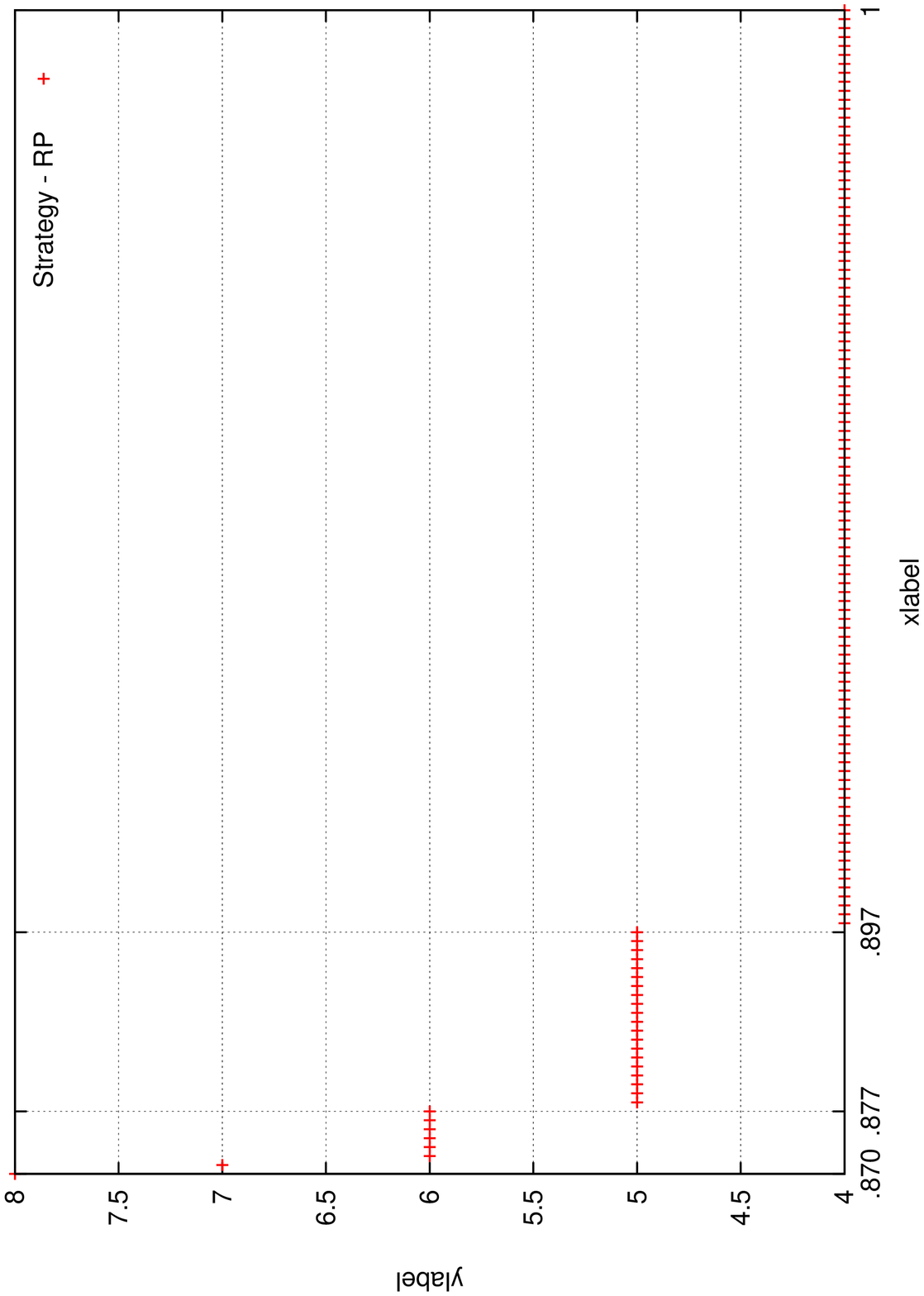}
}
\caption[$g(\ell)$ versus $\ell$ and $\ell_0$ values for RP]{Results from a C++ program: (a) $p$ values plotted as a contour
for $g(\ell)$ versus $\ell$, and (b) experimental values for $\ell_0$.}
\label{fig:curves} 
\end{figure}

A computational study suggests that there exists a $p_0$ such that $g(\ell)$ has a positive minimum
for $p \ge p_0$, and $g(\ell)$ decreases monotonically for $p < p_0$. This is summarised
in Figure~\ref{fig:curveRPS}. Provided this happens, suppose the minimum value $\alpha$
occurs at $\ell = \ell_0(p)$ for $p \ge p_0$.  We will write $\ell_0(p)$ simply as $\ell_0$ for notational
simplicity.  (See Figure~\ref{fig:curveRPSL0}.)
Then

\begin{equation*}
\alpha = g(\ell_0) = \dfrac{\hat{w}_{\ell_0}}{\ell_0}\ .
\end{equation*}

We use this property of the function  $g(\ell)$,i.e.\ having a minimum for high $p$ values, to define
the weights $w_\ell$. Lemma~\ref{lem:minexists} below gives the proof of existence for $\ell_0$. Now,
for $p \ge p_0$, define the weights of the runs as
\begin{equation}
\label{e:weights}
w_\ell = \left\{ \begin{array}{ll}
\hat{w}_\ell & \ \ \textrm{if $\ell \leq \ell_0$,}\\
\alpha \ell &  \ \ \textrm{otherwise,}
\end{array} \right.
\end{equation}
where $\alpha$ is a local minimum of the function $g(\ell)$. The following lemma proves the validity of the
assumption that $g(\ell)$ has a minimum when $p \ge p_0$.

\begin{lemma}
\label{lem:minexists}
There exists $p_0\leq 0.870$ such that $g(\ell)$ has a minimum when $p \ge p_0$.
\end{lemma}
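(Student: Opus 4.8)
The plan is to strip the running sum out of the defining recurrence \eqref{e:recurrence} and then read off the growth of $\hat{w}_\ell$ from the resulting fixed-order recurrence. Writing \eqref{e:recurrence} at index $\ell$ and at index $\ell-1$ and subtracting annihilates the cumulative term $\sum_{i=0}^{\ell-2}\hat{w}_i$, leaving a linear third-order recurrence
\[
\hat{w}_{\ell+1} = \bigl(1-\tfrac12(\ell a+b)\bigr)\hat{w}_\ell + \bigl(\tfrac12((\ell-1)a+b)-p(1-p)\bigr)\hat{w}_{\ell-1} - p\,\hat{w}_{\ell-2},
\]
where $a = p^2-2p$ and $b = -(p^2-2p+2)+\omega$, valid for $\ell\ge 3$. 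The decisive feature is that for large $\ell$ the coefficient of $\hat{w}_\ell$ is dominated by $-\tfrac12\ell a = \tfrac12\ell p(2-p)>0$ while that of $\hat{w}_{\ell-1}$ is dominated by $+\tfrac12\ell a<0$, so to leading order $\hat{w}_{\ell+1}\approx \tfrac12 p(2-p)\,\ell\,(\hat{w}_\ell-\hat{w}_{\ell-1})$.

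First I would establish a self-reinforcement (trapping) statement: once $\hat{w}_\ell>\hat{w}_{\ell-1}>0$, the ratio $\rho_\ell=\hat{w}_\ell/\hat{w}_{\ell-1}$ obeys, to leading order, $\rho_{\ell+1}\approx \tfrac12 p(2-p)\,\ell\,(1-1/\rho_\ell)$, which exceeds $1$ — and indeed grows with $\ell$ — as soon as $\ell$ is past a small constant. Proving this by induction shows the increasing regime is self-perpetuating, so $\hat{w}_\ell$ grows super-linearly and $g(\ell)=\hat{w}_\ell/\ell\to\infty$. Hence $g$ cannot decrease monotonically; and since $g(1)=\hat{w}_1$ while $g(2)=(1-\tfrac12\omega)\hat{w}_1/2<g(1)$ starts the sequence off decreasing by \eqref{e:recurrencePair}, $g$ must attain an interior minimum $\alpha=g(\ell_0)>0$ at a finite, $n$-independent $\ell_0$. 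Positivity of $\alpha$ comes from carrying the positivity of $\hat{w}_\ell$ through the bounded transient range $\ell\le \ell_0$.

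The remaining, quantitative part is the bound $p_0\le 0.870$. Here I would use monotonicity in $p$: for $p\in(0,1)$ larger $p$ increases the growth coefficient $\tfrac12 p(2-p)$ and pushes the sequence into the increasing regime sooner, so the set of $p$ for which $g$ has a minimum is an up-set. It then suffices to exhibit a single $p^\ast\le 0.870$ at which the turnaround provably occurs — a finite verification using the third-order recurrence showing that $g$ decreases and then strictly increases within a bounded window (the data in Figure~\ref{fig:curveRPSL0} indicate $\ell_0\le 8$), which can be made rigorous with interval or exact rational arithmetic.

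The main obstacle I expect is controlling the transient precisely enough to pin down the threshold. Far from $p_0$ the qualitative picture (monotone decay for small $p$, turnaround for large $p$) is clear, but near $p_0\approx 0.870$ the recurrence carries positive and negative contributions of comparable magnitude with $\ell$-dependent coefficients, so both the self-reinforcement induction and the comparison used for monotonicity in $p$ must be set up carefully to guarantee that the minimum genuinely exists for \emph{every} $p\ge p_0$ with $p_0\le 0.870$, rather than merely in the limit of large $p$.
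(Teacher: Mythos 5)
Your differencing step is correct and genuinely nice: subtracting consecutive instances of \eqref{e:recurrence} does annihilate the cumulative sum, and your third-order recurrence (including the coefficient $-p$ on $\hat{w}_{\ell-2}$, since $-p(2-p)+p(1-p)=-p$) checks out. But the proof has a real gap at exactly the point the lemma lives: the claim that $\{p : g \text{ has a minimum}\}$ is an up-set. You justify this only by noting that the growth coefficient $\tfrac12 p(2-p)\ell$ increases with $p$, but increasing $p$ simultaneously makes the negative contributions larger in magnitude (the coefficient of $\hat{w}_{\ell-1}$ becomes more negative, as does the $-p\,\hat{w}_{\ell-2}$ term), and $\hat{w}_\ell(p)$ is not monotone in $p$ in any obvious sense. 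No comparison or coupling argument is set up, and without it, verifying a turnaround at a single $p^\ast$ proves nothing about all $p \ge p_0$ --- which is the entire content of the statement. The paper avoids this issue altogether: it computes $h(\ell)=g(\ell+1)-g(\ell)$ for $\ell=1,\dots,8$ as \emph{explicit polynomials in $p$} (with seeds $\hat{w}_0=0$, $\hat{w}_1=1$, $\omega=0$) and isolates their sign ranges over $[0,1]$, concluding that for every $p\ge 0.870$ some $h(\ell)$ with $\ell\le 8$ is positive while the earlier ones are negative, hence $\ell_0\le 8$ uniformly on the whole range. That single finite symbolic computation replaces both your single-point verification and your unproven monotonicity claim.

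Your trapping induction is also shakier than you suggest. Near the turnaround one has $\rho_\ell \approx 1$, so $1-1/\rho_\ell$ is small and $\tfrac12 p(2-p)\,\ell\,(1-1/\rho_\ell)>1$ does not follow ``as soon as $\ell$ is past a small constant''; moreover the terms you discard ($b$, $p(1-p)$, and $-p\,\hat{w}_{\ell-2}$) are of comparable size precisely in the range $\ell\le 8$ where the minimum actually occurs, and for $p<p_0$ the same asymptotic heuristic would wrongly predict eventual growth even though $g$ empirically decreases monotonically there (the resolution being that the differences $\hat{w}_\ell-\hat{w}_{\ell-1}$ turn negative first, e.g.\ $f(3)<0$ for $p<0.689$ --- so the increasing regime is not self-perpetuating from small $\ell$). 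Finally, note that the super-linear growth of $g$ is more than the lemma needs: the weight construction \eqref{e:weights} only uses the \emph{first} local minimum $\alpha=g(\ell_0)>0$ with $\ell_0$ bounded by a constant, i.e.\ the first sign change of $h$, which is exactly what the paper's direct computation certifies.
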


\begin{proof}
To prove the lemma, we determine polynomial functions of $p$ satisfying the first few terms of the recurrences (\ref{e:recurrence}), with seeds $\hat{w}_0 = 0$ and $\hat{w}_1 = 1$.
We use these to find inequalities which determine the range of $p$ such that $g(\ell)$ has a minimum. We then solve these  numerically.
In fact, we use the decrease in $g(\ell)$ at a given $\ell$, which we denote by $h(\ell)$. That is,
\begin{equation*}
h(\ell) = g(\ell+1) - g(\ell)\ .
\end{equation*}

Thus, if $g(\ell)$ has its first local minimum at $\ell = \ell_0$, $h(\ell)$ will be negative
for $\ell = 1,2, \ldots, \ell_0-1$ and positive at $\ell = \ell_0$. For simplicity, we assume
that $\omega = 0$ in the calculations that follow. Now, solving (\ref{e:recurrence}) and (\ref{e:recurrencePair}) for $h(\ell)$, with $\hat{w}_0 = 0$ and $\hat{w}_1 = 1$, we obtain:

\begin{enumerate}
  \item $h(1) = -\tfrac{1}{2}.$ Hence, $h(1)<0$ for all $0\leq p\leq 1$.
  \item $h(2) = -\tfrac{1}{6}+\tfrac{1}{6}p^2.$  Hence, $h(2) \le 0$ for all $0\leq p\leq 1$.
  \item $h(3) = -\tfrac{1}{12} -\tfrac{1}{4}p+{\frac {5}{24}}p^2-\tfrac{1}{8}p^4 +\tfrac{1}{4}p^3$. Hence, $h(3) \le 0$ for all $0\leq p\leq 1$.
  \item $h(4) = -\tfrac{1}{20}-{\frac {7}{20}}p-\tfrac{3}{8}p^2+{\frac {21}{20}}p^3+{\frac {11}
{40}}p^4-\tfrac{3}{5}p^5+{\frac {3}{20}}p^6$. Hence, $h(4) \le 0$  for $0\leq p \leq 0.897.$
\end{enumerate}

Continuing in this way, as $\ell$ goes from 5 to 10, the range of $p$ for which $h(\ell) \le 0$ becomes gradually smaller:
\begin{itemize}
  \item $h(5) \le  0$ for $0\leq p \le 0.877$.
  \item $h(6) \le  0$ for $0\leq p \le 0.871$.
  \item $h(7) \le  0$ for $0\leq p \le 0.870$.
  \item $h(8) \le  0$ for $0\leq p \le 0.869$.
\end{itemize}

Here the upper bounds for $p$  are rounded to three decimal places. Note that $h(8)$ is positive if $p \ge  0.870$.
Therefore, if $p \geq 0.870$, $h(\ell)$ is negative for $1\leq \ell \leq 7$ and positive for $\ell=8$. Thus $g(\ell)$
decreases up to $\ell=8$ and increases at $\ell=9$. Hence, by definition, $\ell_0 \le 8$ when  $p \geq 0.870$,  and the lemma is proved.
\end{proof}

Next we prove two properties of the function $w_\ell$, which will be used later in the proof.

\begin{lemma}
\label{lem:wellIncreasing}
$w_{\ell}$ is a non-decreasing sequence.
\end{lemma}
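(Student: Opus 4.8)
The plan is to exploit the piecewise definition \eqref{e:weights} of $w_\ell$ and treat the two regimes separately. On the linear piece, where $\ell \ge \ell_0$ and $w_\ell = \alpha\ell$, monotonicity is immediate: since $\alpha>0$ is the positive minimum of $g$, we have $w_{\ell+1}-w_\ell = \alpha>0$. The only place the two pieces meet is $\ell=\ell_0$, and there I would use the defining relation $\alpha = g(\ell_0)=\hat{w}_{\ell_0}/\ell_0$, i.e.\ $\hat{w}_{\ell_0}=\alpha\ell_0$, so that $w_{\ell_0+1}-w_{\ell_0}=\alpha(\ell_0+1)-\alpha\ell_0=\alpha>0$. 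Hence the linear tail and the junction cause no trouble, and the whole burden falls on the recurrence piece.

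The substantive work is to show that $\hat{w}_\ell$ is non-decreasing for $0\le\ell\le\ell_0$. Here I would stress that the monotonicity of $g(\ell)$ established in Lemma~\ref{lem:minexists} is \emph{not} by itself sufficient: $g$ decreasing only gives $\hat{w}_{\ell+1}/(\ell+1)\le \hat{w}_\ell/\ell$, which bounds the ratio from above rather than the difference from below. The useful bookkeeping identity is
\begin{equation*}
\hat{w}_{\ell+1}-\hat{w}_\ell=(\ell+1)g(\ell+1)-\ell g(\ell)=g(\ell+1)+\ell\,h(\ell),
\end{equation*}
obtained by substituting $g(\ell)=g(\ell+1)-h(\ell)$. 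For $\ell<\ell_0$ we have $h(\ell)<0$, so proving $\hat{w}_{\ell+1}\ge\hat{w}_\ell$ amounts to verifying the quantitative bound $g(\ell+1)\ge \ell\,\abs{h(\ell)}$, i.e.\ that $g$, although decreasing, does not fall too steeply relative to its current value.

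Finally, I would close the argument by reducing it to a finite check. By Lemma~\ref{lem:minexists}, $\ell_0\le 8$ whenever $p\ge 0.870$, so in the regime relevant to Theorem~\ref{thm:fastconvRP} the recurrence piece involves only $\hat{w}_0,\dots,\hat{w}_8$, each of which is the explicit polynomial in $p$ produced by \eqref{e:recurrence} and \eqref{e:recurrencePair} with seeds $\hat{w}_0=0$, $\hat{w}_1=1$ (the same polynomials already used to tabulate $h(\ell)$). I would then verify $\hat{w}_{\ell+1}-\hat{w}_\ell\ge 0$ on the relevant interval of $p$ directly, either by exhibiting each difference polynomial and checking its sign for $p\ge 0.870$, or equivalently by checking $g(\ell+1)+\ell\,h(\ell)\ge 0$ using the already-computed $h(\ell)$. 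The main obstacle is exactly this last step: it is a finite but genuinely quantitative polynomial sign verification, and care is needed to confirm the inequalities hold uniformly across the admissible range of $p$ rather than relying on the qualitative monotonicity of $g$ alone.
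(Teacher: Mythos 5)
Your proposal takes essentially the same route as the paper: the paper defines $f(\ell)=\hat{w}_{\ell+1}-\hat{w}_{\ell}$, computes each difference as an explicit polynomial in $p$ from the recurrences \eqref{e:recurrence} and \eqref{e:recurrencePair} with seeds $\hat{w}_0=0$, $\hat{w}_1=1$ (and $\omega=0$), verifies $f(\ell)\ge 0$ for $\ell=0,\dots,7$ on a range covering $p\ge 0.870$, and then dismisses the linear tail $w_\ell=\alpha\ell$ exactly as you do. The finite polynomial sign check you correctly identify (and defer) is precisely the executed content of the paper's proof, so your plan is correct and matches it.
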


\begin{proof}
Recall that $w_\ell = \hat{w}_\ell$ for $\ell \le \ell_0$.
 Furthermore, from Lemma~\ref{lem:minexists}, we know that $\ell_0 \le 8$ for $p \ge p_0$. Hence,
we first prove that $w_{\ell}$ is increasing up to $\ell_0=8$, by
proving that $\hat{w}_{\ell}$ is increasing as $\ell$ goes from
$1$ to $8$. This is done by solving the recurrences \eqref{e:recurrence} and (\ref{e:recurrencePair}), with seeds $\hat{w}_0 = 0$ and $\hat{w}_1 = 1$. Here also, for simplicity, we assume
that $\omega = 0$. Let $f(\ell)$ be defined by
\begin{equation*}
f(\ell) = \hat{w}_{\ell+1} - \hat{w}_{\ell} \ .
\end{equation*}

Then it suffices to show that $f(\ell) \ge 0$
for $\ell = 0,1,\ldots,7$. But we have
\begin{enumerate}
  \item $f(0) = 1.$ Hence, $f(0)\ge 0$ for all $0 \le  p \le 1.$
  \item $f(1) = 0.$ Hence, $f(1)\ge 0$ for all $0 \le  p \le 1.$
  \item $f(2) = \tfrac{1}{2}p^2$. Hence, $f(2)\ge 0$ for all $0 \le  p \le 1.$
  \item $f(3) =  -p+p^2+p^3-\tfrac{1}{2}p^4$. Hence, $f(3) \ge 0$  for $ 0.689 \le p \le 1.$
  \item $f(4) =  -2p-\tfrac{3}{2}p^2+\tfrac{11}{2}p^3+\tfrac{5}{4}p^4+\tfrac{3}{4}p^6-3p^5$.
  Hence, $f(4) \ge 0$  for $0.805 \le p \le 1.$
\end{enumerate}

Likewise, we obtain
\begin{itemize}
  \item $f(5) \ge 0$ for $ 0.850 \le p \le 1$.
  \item $f(6) \ge 0$ for $ 0.865 \le p \le 1$.
  \item $f(7) \ge 0$ for $ 0.869 \le p \le 1$.
\end{itemize}

All these together show that $\hat{w}_\ell$ is increasing in the range $\ell = 1,2,\ldots,\ell_0$ for $ 0.869 \le p \le 1$. The lemma then follows from the definition that $w_\ell$ is increasing when $\ell \ge \ell_0$.
\end{proof}

\begin{lemma}
\label{lem:nonincreasing}
$\frac{w_\ell}{\ell}$ is a non increasing sequence.
\end{lemma}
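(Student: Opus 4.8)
The plan is to exploit the piecewise definition \eqref{e:weights} and track the ratio $w_\ell/\ell$ across the two regimes $\ell \le \ell_0$ and $\ell > \ell_0$. Observe that for $\ell \le \ell_0$ we have $w_\ell = \hat{w}_\ell$, so $w_\ell/\ell = \hat{w}_\ell/\ell = g(\ell)$, whereas for $\ell > \ell_0$ we have $w_\ell = \alpha\ell$, so $w_\ell/\ell = \alpha$ is constant. Thus the claim reduces to showing that $g(\ell)$ is non-increasing on $\{1,\ldots,\ell_0\}$ and that the two pieces match at the junction $\ell=\ell_0$.

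First I would dispose of the tail $\ell \ge \ell_0$: there $w_\ell/\ell = \alpha$ identically, so the sequence is flat and trivially non-increasing on this range. Next, for $\ell < \ell_0$ I would invoke the monotonicity of $g$ up to its first local minimum, which is precisely what the proof of Lemma~\ref{lem:minexists} established: writing $h(\ell) = g(\ell+1) - g(\ell)$, that argument verified $h(\ell) < 0$ for $\ell = 1,2,\ldots,\ell_0-1$ (and $h(\ell_0) \ge 0$). Hence $g(\ell+1) \le g(\ell)$ for every $\ell < \ell_0$, i.e.\ $w_\ell/\ell$ is non-increasing throughout the head.

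It then remains to glue the two regimes at $\ell=\ell_0$. By definition $\alpha = g(\ell_0) = \hat{w}_{\ell_0}/\ell_0 = w_{\ell_0}/\ell_0$, so the value of $w_\ell/\ell$ at $\ell_0$ is exactly the constant value $\alpha$ taken for all $\ell > \ell_0$; in particular $w_{\ell_0}/\ell_0 = w_{\ell_0+1}/(\ell_0+1) = \alpha$, so there is no upward jump across the boundary. Combining the three observations yields that $w_\ell/\ell$ is non-increasing on all of $\{1,\ldots,n\}$.

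I do not expect a genuine obstacle here, since the real content was already spent in Lemma~\ref{lem:minexists}, which guarantees not merely that $g$ has a local minimum at $\ell_0$ but that $g$ descends monotonically to it. The only point needing care is that the chosen $\ell_0$ is the \emph{first} local minimum, so that no interior increase of $g$ occurs before $\ell_0$; this is exactly why the sign pattern $h(1),\ldots,h(\ell_0-1) < 0$ from Lemma~\ref{lem:minexists} is the crucial input, and the remainder is bookkeeping.
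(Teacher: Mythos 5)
Your proposal is correct and follows essentially the same route as the paper's proof: both reduce the claim to the fact that $g(\ell)=\hat{w}_\ell/\ell$ decreases up to the first local minimum $\ell_0$ (so $w_\ell/\ell$ is non-increasing on the head) and that $w_\ell/\ell=\alpha$ is constant for $\ell>\ell_0$. Your explicit check that the two pieces match at the junction, via $\alpha=g(\ell_0)=w_{\ell_0}/\ell_0$, is a small addition that the paper leaves implicit, but the argument is otherwise identical.
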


\begin{proof}

From the definition of $\ell_0$, $\frac{\hat{w}_\ell}{\ell}$ is decreasing for $1\leq\ell\leq\ell_0$.
Moreover, $w_\ell = \hat{w}_\ell$ for $\ell \le \ell_0$. Therefore $\frac{w_\ell}{\ell}$ also decreases for $1\leq\ell \le \ell_0$.
When $\ell > \ell_0$, we have $\frac{w_\ell}{\ell} = \alpha$ which is obviously non-increasing.
\end{proof}

The following lemmas show that the inequalities \eqref{e:singleton} to
\eqref{e:singleplus} are satisfied by the proposed weights.

\begin{lemma}
\label{lem:weightincrease}
The weights $w_\ell$ defined in (\ref{e:weights}), with seeds $\hat{w}_0 = 0$ and $\hat{w}_1 = 1$, satisfy  inequalities~(\ref{e:singleton}) and (\ref{e:shortequation}).
\end{lemma}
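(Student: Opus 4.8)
The plan is to verify the two inequalities by splitting the range of $\ell$ at the threshold $\ell_0$, exploiting the fact that the recurrence \eqref{e:recurrence} is nothing but \eqref{e:shortequation} written as an equality after solving for $w_{\ell+1}$ (recall $\delta n = \omega$), and likewise that \eqref{e:recurrencePair} is \eqref{e:singleton} at equality. Concretely, for every $\ell$ with $1 \le \ell \le \ell_0 - 1$, all weights occurring in \eqref{e:shortequation} carry indices at most $\ell+1 \le \ell_0$ and so equal the corresponding $\hat{w}$; by construction \eqref{e:shortequation} then holds with equality. The same applies to \eqref{e:singleton}: since $\ell_0 \ge 2$ we have $w_1 = \hat{w}_1 = 1$ and $w_2 = \hat{w}_2 = 1 - \tfrac12\omega$, and $2w_2 - (2-\omega)w_1 = 0$. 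Thus the small-$\ell$ range is immediate, and all the real work lies in the regime $\ell \ge \ell_0$, where $w_\ell = \alpha\ell$.

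For that regime I would first note that the two branches of \eqref{e:weights} agree at the seam, $\hat{w}_{\ell_0} = \alpha\ell_0$, and then substitute the linear weights into the left-hand side of \eqref{e:shortequation}. Writing $q = p(2-p)$, so that $p^2 - 2p = -q$, the key algebraic observation is that the quadratic-in-$\ell$ contributions cancel exactly: the term $2p(2-p)\sum_{i=0}^{\ell-2} w_i$ produces $q\alpha\ell^2$, while the coefficient $\ell(p^2-2p)w_\ell$ produces $-q\alpha\ell^2$. A short expansion then collapses the ``purely linear'' left-hand side to $\alpha\bigl((-2p+\omega)\ell + 2 + 2p\bigr)$, which is nonpositive once $\ell \ge (2+2p)/(2p-\omega)$; for $p \ge p_0$ and $\omega$ small this threshold is below $\ell_0 \le 8$, so the purely linear part is comfortably negative throughout $\ell \ge \ell_0$.

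The one gap is that the weights are not linear for the small indices $i \le \ell_0$ inside the sum $\sum_{i=0}^{\ell-2} w_i$: there $w_i = \hat{w}_i \ge \alpha i$, since $\hat{w}_i/i = g(i) \ge \alpha$ by Lemma~\ref{lem:nonincreasing}. Hence the true left-hand side exceeds the purely linear value by the nonnegative, $\ell$-independent correction $2p(2-p)\sum_{i=0}^{\ell_0}(\hat{w}_i - \alpha i)$. I would therefore write the exact left-hand side, for $\ell \ge \ell_0 + 2$, as $\alpha\bigl((-2p+\omega)\ell + 2 + 2p\bigr) + 2p(2-p)\sum_{i=0}^{\ell_0}(\hat{w}_i - \alpha i)$ and argue that the linear decay in $\ell$ dominates this fixed correction. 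The two boundary indices $\ell = \ell_0$ and $\ell = \ell_0 + 1$, where $w_{\ell-1}$ still lies in the $\hat{w}$ regime, I would handle separately; in particular at $\ell = \ell_0$ the correction vanishes and the expression collapses to $2\bigl(\alpha(\ell_0+1) - \hat{w}_{\ell_0+1}\bigr) < 0$, using $g(\ell_0+1) > g(\ell_0) = \alpha$.

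The main obstacle is precisely this last step: because $\alpha$ and the correction $\sum_{i \le \ell_0}(\hat{w}_i - \alpha i)$ are both $O(1)$ while the linear term is only mildly negative at $\ell = \ell_0 + 2$, the inequality at the first few $\ell \ge \ell_0$ is quantitatively tight and cannot be settled by crude bounds. I expect to close it by inserting the explicit polynomial expressions for $\hat{w}_0, \ldots, \hat{w}_{\ell_0+1}$ already produced in the proofs of Lemma~\ref{lem:minexists} and Lemma~\ref{lem:wellIncreasing}, together with $\ell_0 \le 8$, thereby reducing the claim to a finite numerical check over $p \in [p_0, 1]$; the monotone linear decay in $\ell$ then disposes of all larger $\ell$ at once.
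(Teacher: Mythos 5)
Your setup tracks the paper's own proof more closely than you may realize. The small-$\ell$ range handled by recurrence-equality, the observation that at $\ell=\ell_0$ the left side of \eqref{e:shortequation} collapses to $2\bigl(\alpha(\ell_0+1)-\hat{w}_{\ell_0+1}\bigr)\le 0$ via $g(\ell_0+1)\ge g(\ell_0)=\alpha$, the quadratic cancellation under linear weights, and the fixed correction $2p(2-p)\sum_{i\le\ell_0}(\hat{w}_i-\alpha i)$ are all correct; the last is exactly the paper's substitution $c_i=w_i-i\alpha$ from \eqref{e:cjsubstitute}, and your requirement that the linear decay in $\ell$ beat this correction is precisely the paper's threshold condition $\ell\ge\ell_1^*$ obtained from \eqref{e:expandedwkplus1}. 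Up to that point your derivation is sound and essentially identical in content.

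The genuine gap is at the decisive step, and your diagnosis of it is wrong. You assert that the tightness at the first few $\ell>\ell_0$ ``cannot be settled by crude bounds'' and defer to an unexecuted numerical check of the polynomials $\hat{w}_0,\dots,\hat{w}_{\ell_0+1}$ over $p\in[p_0,1]$. That check is never carried out, and organizing it is not trivial: $\ell_0(p)$ varies with $p$ (for instance $h(4)>0$ once $p>0.897$, so $\ell_0$ drops), forcing a partition of $[p_0,1]$ by the value of $\ell_0$, with $\alpha=g(\ell_0)$ itself a $p$-dependent root-like quantity on each piece. The paper closes the gap structurally, with no further numerics: since \eqref{e:shortequation} holds at $\ell=\ell_0$ by construction, substituting $k=\ell_0$ into \eqref{e:expandedwkplus1} yields the self-referential lower bound $\ell_0\ge\ell_0^*$, and then
\[
\ell_1^*-\ell_0\ \le\ \ell_1^*-\ell_0^*\ =\ \frac{p\,c_{\ell_0-1}}{\alpha\bigl(p-\tfrac{1}{2}\omega\bigr)}\ \approx\ \frac{c_{\ell_0-1}}{\alpha}\ =\ 1-\frac{w_{\ell_0}-w_{\ell_0-1}}{\alpha}\ \le\ 1,
\]
the last inequality by the monotonicity of $w_\ell$ (Lemma~\ref{lem:wellIncreasing}). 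Hence every integer $\ell\ge\ell_0+1$ clears your threshold automatically. The missing idea, in short, is to bound $\ell_1^*$ against $\ell_0$ itself --- exploiting that $\ell_0$ satisfies the same inequality --- rather than against explicit polynomial values of the seeds; without either that trick or the actually executed computation, your proposal stops exactly where the lemma's real content begins.
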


\begin{proof}

Let $k \geq \ell_0$. Then, from (\ref{e:shortequation}), we get
\begin{equation}
\label{e:weightkplus1}
w_{k+1} \leq - p(2-p)  \sum_{i=0}^{k-2}w_i - p(1-p) w_{k-1}- \dfrac{1}{2} \bigl(k(p^2-2p) - (p^2 - 2p +2) + \omega \bigr) w_k\ .
\end{equation}
Let
\begin{equation}
\label{e:cjsubstitute}
c_i = w_i - i\alpha\ .
\end{equation}

Then, substituting   (\ref{e:weights}) and (\ref{e:cjsubstitute})   into (\ref{e:weightkplus1}), we obtain
\begin{equation*}
\begin{split}
(k+1)\alpha \leq & - p(2-p)  \sum_{i=0}^{k-2}(i\alpha  + c_i) - p(1-p) \bigl((k-1)\alpha +c_{k-1}\bigr) \\
&- \dfrac{1}{2} \bigl(k(p^2-2p) - (p^2 - 2p +2) + \omega \bigr)  k\alpha\ .
\end{split}
\end{equation*}

Simplifying yields
\begin{equation}
\label{e:expandedwkplus1}
k \ge  \dfrac{\alpha + \alpha p + p(2-p) \sum_{i=0}^{k-2} c_i + p(1-p) c_{k-1}}{\alpha \bigl(p- \tfrac{1}{2}\omega\bigr)}\ .
\end{equation}

If $k= \ell_1 (> \ell_0)$, we get

\begin{equation*}
\ell_1 \ge \dfrac{\alpha + \alpha p + p(2-p) \sum_{i=0}^{\ell_1-2} c_i + p(1-p) c_{\ell_1-1}}{\alpha \big(p- \tfrac{1}{2}\omega\big)}\ .
\end{equation*}

But, from (\ref{e:weights}) and (\ref{e:cjsubstitute}), we know that $c_k = 0$ for $k \ge \ell_0$. Since $\ell_1 > \ell_0$, we have $c_{\ell_1-1}=0$ and
$ \sum_{i=0}^{\ell_1-2}c_i = \sum_{i=0}^{\ell_0-1} c_i$. Hence

\begin{equation*}
\ell_1 \ge \dfrac{\alpha + \alpha p + p(2-p) \sum_{i=0}^{\ell_0-1} c_i}{\alpha \big(p- \tfrac{1}{2}\omega\big)} = \ell_1^* \text{ (say). }
\end{equation*}

Thus, inequality (\ref{e:shortequation}) holds for $\ell \ge \ell_1^*$. We also know that (\ref{e:shortequation}) holds when $\ell \leq \ell_0$ by (\ref{e:weights}). Therefore, showing $\ell_1^{*} - \ell_0 \le 1 $ will mean that (\ref{e:shortequation}) is true for all values of $\ell$.
To do this, we determine a lower bound on $\ell_0$ by substituting $k=\ell_0$ into (\ref{e:expandedwkplus1}). Thus
\begin{equation*}
\ell_0 \ge \dfrac{\alpha + \alpha p + p(2-p) \sum_{i=0}^{\ell_0-2} c_i + p(1-p) c_{\ell_0-1}}{\alpha \big(p- \tfrac{1}{2}\omega\big)} = \ell_0^* \text{ (say). }
\end{equation*}

Hence we have

\begin{equation*}
\begin{split}
\ell_1^{*} - \ell_0\ \le\ \ell_1^{*} - \ell_0^{*} &= \dfrac{p(2-p)c_{\ell_0-1} -p(1-p)c_{\ell_0-1}}{\alpha \big(p- \tfrac{1}{2}\omega\big)} \\
& \approx \dfrac{c_{\ell_0-1}}{\alpha} , \ \ \mbox{since $\tfrac{p}{\big(p- \tfrac{1}{2}\omega\big)} \to 1$ when $\omega \to 0$.}
\end{split}
\end{equation*}

Substituting (\ref{e:cjsubstitute}) into the above inequality yields
\begin{equation*}
\begin{split}
\ell_1^{*} - \ell_0\ &\le\ \dfrac{ w_{\ell_0-1}- \alpha(\ell_0-1)}{\alpha} \\
& = 1- \dfrac{w_{\ell_0} - w_{\ell_0-1} }{\alpha}  \\
& \le 1,  \ \  \mbox{since $w_{\ell_0} > w_{\ell_0-1}$ from Lemma~\ref{lem:wellIncreasing}.}
\end{split}
\end{equation*}

Finally, it can easily be verified that (\ref{e:singleton}) holds with $w_1 = \hat{w}_1 = 1$ and $w_2 = \hat{w}_2 = 1-\omega/2$, completing the proof.
\end{proof}

\begin{lemma}
\label{lem:allminus}
The weights $w_\ell$ defined in (\ref{e:weights}), with seeds $\hat{w}_0 = 0$ and $\hat{w}_1 = 1$, satisfy inequality \eqref{e:allminus}.
\end{lemma}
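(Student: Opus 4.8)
The plan is to exploit the structural fact, established in Lemma~\ref{lem:minexists}, that $\ell_0 \le 8$ whenever $p \ge p_0$. Inequality \eqref{e:allminus} involves only the three largest-index weights $w_{n-2}$, $w_{n-1}$ and $w_n$, and for any interesting value of $n$ we certainly have $n-2 > \ell_0$. Hence all three indices lie in the regime $\ell > \ell_0$, so by the definition \eqref{e:weights} we may replace them by their linear values $w_{n-2} = \alpha(n-2)$, $w_{n-1} = \alpha(n-1)$ and $w_n = \alpha n$. This reduces the lemma to a purely algebraic check about the linear tail of the weight sequence.

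First I would substitute these expressions into \eqref{e:allminus} and factor out the common positive factor $\alpha$, so that the claim becomes
\[
p^2(n-2) + 2p(1-p)(n-1) + (p^2 - 2p + \delta)n \le 0 .
\]
Next I would collect terms by their dependence on $n$. The key observation is that the coefficient of $n$ coming from the purely $p$-dependent parts is
\[
p^2 + 2p(1-p) + (p^2 - 2p) = 0 ,
\]
so every $O(n)$ contribution cancels, apart from the term $\delta n = \omega$ that arises from the choice $\delta = \omega/n$. The surviving constants are $-2p^2 - 2p(1-p) = -2p$, so the left-hand side collapses to exactly $\alpha(\omega - 2p)$.

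It then suffices to observe that $\omega - 2p \le 0$. Since $\omega > 0$ is a small constant of our own choosing and $p \ge 0.870$, we have $2p \ge 1.74 > \omega$, and the inequality holds comfortably. There is essentially no hard step here: the only point requiring care is recognising that the linear regime $w_\ell = \alpha\ell$ governs all three relevant indices, which is immediate from $\ell_0 \le 8$, after which the cancellation of the $n$-linear terms does all the work. The one genuine output of the calculation is the side-condition $\omega \le 2p$; since this is far weaker than the constraints on $\omega$ already forced by inequalities \eqref{e:singleton}--\eqref{e:shortequation}, it imposes no new restriction on the admissible range of $p$, completing the argument.
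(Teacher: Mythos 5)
Your proposal is correct and follows essentially the same route as the paper: both substitute the linear-tail values $w_\ell = \alpha\ell$ for $\ell = n-2, n-1, n$ (justified by $\ell_0 \le 8$), cancel the $O(n)$ terms, and reduce inequality \eqref{e:allminus} to the condition $\omega \le 2p$, which holds for small $\omega$. Your write-up is simply a more explicit account of the same algebraic simplification the paper performs.
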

\begin{proof}
Assume $\ell_0\leq 8$ and $n\geq 10$, so $w_\ell=\ell\alpha$ for $\ell=n-2,n-1,n$. Then we require
\begin{equation}
 p^2 (n -2)\alpha  + 2p(1-p)(n-1)\alpha  + \left(p^2-2p+ \dfrac{\omega}{n}\right)n\alpha\ \le\ 0,
\end{equation}
which simplifies to
\begin{equation*}
 \omega\ \le\ 2p,
\end{equation*}
so is satisfied for small enough $\omega$.
\end{proof}

\begin{lemma}
\label{lem:singleplus}
The weights $w_\ell$ defined in (\ref{e:weights}), with seeds $\hat{w}_0 = 0$ and $\hat{w}_1 = 1$, satisfy inequality (\ref{e:singleplus}).
\end{lemma}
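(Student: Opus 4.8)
The plan is to read inequality~(\ref{e:singleplus}) as a subadditivity statement, $w_{\ell_1}+w_{\ell_2}\ge w_{\ell_1+\ell_2}$, and to derive it in one stroke from the ratio-monotonicity already established in Lemma~\ref{lem:nonincreasing}, rather than by splitting into cases according to how $\ell_1$, $\ell_2$ and $\ell_1+\ell_2$ sit relative to the threshold $\ell_0$. The key observation is that any positive sequence whose ratio $w_\ell/\ell$ is non-increasing is automatically subadditive.

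Concretely, I would set $m=\ell_1+\ell_2$ and note that $\ell_1\le m$ and $\ell_2\le m$. Applying Lemma~\ref{lem:nonincreasing} twice then gives
\begin{equation*}
\frac{w_{\ell_1}}{\ell_1}\ \ge\ \frac{w_m}{m}\qquad\text{and}\qquad \frac{w_{\ell_2}}{\ell_2}\ \ge\ \frac{w_m}{m}\ .
\end{equation*}
Multiplying the first inequality by $\ell_1$ and the second by $\ell_2$ and adding yields
\begin{equation*}
w_{\ell_1}+w_{\ell_2}\ \ge\ \frac{\ell_1+\ell_2}{m}\,w_m\ =\ w_m\ =\ w_{\ell_1+\ell_2}\ ,
\end{equation*}
which is exactly (\ref{e:singleplus}).

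The virtue of this route is that it handles every regime uniformly: whether $\ell_1+\ell_2$ exceeds $\ell_0$ (so that $w$ is given by the linear branch $\alpha\ell$) or stays at or below it (so that $w=\hat{w}$ throughout), the only property invoked is the ratio-monotonicity, which Lemma~\ref{lem:nonincreasing} asserts across the junction at $\ell_0$. Consequently there is no real obstacle left in this lemma: all of the genuine work was front-loaded into proving that $w_\ell/\ell$ is non-increasing, and once that is in hand the subadditivity is immediate. Should one prefer a more elementary check, the alternative is to verify (\ref{e:singleplus}) directly on the finitely many pairs with $\ell_1+\ell_2\le\ell_0\le 8$, and, when $\ell_1+\ell_2>\ell_0$, to use $w_\ell\ge\alpha\ell$ for each summand together with $w_{\ell_1+\ell_2}=\alpha(\ell_1+\ell_2)$; but the ratio argument above makes this case analysis unnecessary.
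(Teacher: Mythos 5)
Your proof is correct and is essentially identical to the paper's own: both derive the subadditivity $w_{\ell_1}+w_{\ell_2}\ge w_{\ell_1+\ell_2}$ directly from the ratio-monotonicity of Lemma~\ref{lem:nonincreasing}, by writing $w_{\ell_i}=\ell_i\,(w_{\ell_i}/\ell_i)$ and bounding each ratio below by $w_{\ell_1+\ell_2}/(\ell_1+\ell_2)$. No case analysis at the threshold $\ell_0$ appears in the paper either, so your argument matches it step for step.
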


\begin{proof}
\begin{equation*}
w_{\ell_1} + w_{\ell_2} \ge \ell_1 \dfrac{w_{\ell_1}}{\ell_1} + \ell_2\dfrac{w_{\ell_2}}{\ell_2} \ .
\end{equation*}

But, from Lemma \ref{lem:nonincreasing},  $\frac{w_\ell}{\ell}$ is a decreasing sequence. Thus,
\begin{equation*}
w_{\ell_1} + w_{\ell_2} \ge \ell_1 \dfrac{w_{\ell_1 + \ell_2}}{\ell_1 + \ell_2} + \ell_2\dfrac{w_{\ell_1 + \ell_2}}{\ell_1 + \ell_2} = w_{\ell_1 + \ell_2},
\end{equation*}
proving the lemma.
\end{proof}

\noindent \textbf{Proof of Theorem~\ref{thm:fastconvRP}: }
Consider an $\ell_\textrm{d}$-run where $ 1 \le \ell \le n$. Denote by $\E[s_\ell]$ the expected weight of the resulting runs. Then, we have
\begin{equation*}
\E[s_\ell]\ \le\ (1-\delta) w_\ell,
\end{equation*}
for $p \ge p_0$, since the weights (\ref{e:weights}) satisfy  the constraints (\ref{e:singleton}) to  (\ref{e:singleplus}) by Lemma~\ref{lem:weightincrease}, Lemma~\ref{lem:allminus} and Lemma~\ref{lem:singleplus}.
Furthermore, Lemma~\ref{lem:minexists} showed that the weights  in RP can be defined this way for $p\geq 0.870$.

Now, suppose the initial state of the cycle is $S_0$. Let $S_t$ be the resulting state of the cycle after $t$ steps. Then total weight after one step of the game is therefore
\begin{equation*}
\begin{split}
\E[\W(S_1)\mid S_0] &= \sum_{\ell} \E[s_\ell] r_\ell \\
&\le  \sum_{\ell}  (1-\delta)w_\ell  r_\ell \\
&= (1-\delta) \W(S_0)\ .
\end{split}
\end{equation*}
Thus, by total expectation, we have
\begin{equation*}
\E[\W(S_1)] \le (1-\delta) \E[\W(S_0)]\ .
\end{equation*}
We have $\W(S) \le n$. To see this, note that for an $\ell_\textrm{d}$-run, $\hat{w}_1 = 1\geq \hat{w}_\ell/\ell=w_\ell/\ell\geq \alpha$ when $1\leq\ell\leq\ell_0$. Thus $\alpha\ell\leq w_\ell\leq\ell$ when $1\leq\ell\leq\ell_0$. In particular, this implies $\alpha\leq 1$. When $\ell\geq \ell_0$, we have $w_\ell=\alpha\ell\leq\ell$, implying $w_\ell\leq\ell$ for all $1\leq\ell\leq n$. Summing this over all runs in $S$, we have $\W(S) \le n$. Since $\delta = \omega/ n$, we have
\begin{equation*}
\E[\W(S_1)] \le \left(1-\dfrac{\omega}{n}\right) \E[\W(S_0)]\ .
\end{equation*}

Applying this for $t$ steps, we obtain
\begin{equation*}
\E[\W(S_t)]\ \le\ \left(1-\dfrac{\omega}{n} \right)^t \E[\W(S_0)]\ \le\ \left(1-\dfrac{\omega}{n} \right)^t n\ \le\ e^{-\tfrac{\omega t}{n}} n\ \leq\ \varepsilon,
\end{equation*}
when
\begin{equation*}
t\ >\  \dfrac{n}{\omega} \log \left( \dfrac{n}{\varepsilon} \right)\ .
\end{equation*}
We also know that, for any $S\neq S^*$, $\W(S)\geq w_1\geq 1$ by Lemma~\ref{lem:wellIncreasing}. Thus, using Markov's inequality, we obtain
\begin{equation*}
\Pr[\W(S_t) \ne 0]\ =\ \Pr[\W(S_t) \ge 1]\ \le\ \E[\W(S_t)]\ \le\ \varepsilon,
\end{equation*}
and the theorem is proved.
\hfill $\square$

\medskip
\begin{remark}
The above requires satisfying (\ref{e:singleton}) to  (\ref{e:singleplus}). These are all linear inequalities.
Therefore, we can solve them by linear programming. Initially, we solved the problem this way, obtaining the same results as above.
\end{remark}
\medskip
\begin{remark}
The problem for SRP can be formulated and solved in the same way as described in
Section~\ref{s:fastconvergence} for RP. We did this and found that the
convergence to cooperation is fast when $p \ge 0.699$. So, the range of $p$
for which the convergence is fast is bigger for SRP than for RP.
 This is somewhat expected because, for a given $p$,
SRP is more forgiving than RP.
\end{remark}

\section{Slow convergence on the cycle}
\label{s:slowconvergence}

In Section~\ref{s:fastconvergence}, we proved that the IPD game converges
to cooperation fast  for high values of $p$. It raises an
interesting research question: how fast or slow
is the convergence when $p$ is small? In this section, we answer this
question by proving Theorem~\ref{thm:slowConRP}, which shows that the convergence to cooperation
takes time exponential in $n$  for small enough $p$. The idea of the proof is to show that it takes exponential time for a plus-run of length $\Omega(n)$ to be formed. (This is done by analysing plus-runs on the cycle. Therefore, in this section, a run refers to a run of ``pluses'' unless otherwise.) It obviously follows that it takes exponential time for the all-cooperate state to be reached.

\subsection{Problem formulation}

Let $\mathcal{R}^i_\ell(t)$ denote the event that a run of $\ell$ pluses (an $\ell_\textrm{c}$-run) starts at position $i$ at time $t$, i.e.\ $S_k = 1$ for $i \le k \le i+\ell-1$ and $S_{i-1}=S_{i+\ell} = -1.$
By the symmetry of the cycle, and the initial configuration, $\Pr(\mathcal{R}^i_\ell)$ will be the
same for all $i$. Let $\delta_j(t)$ denote the event that $S_j$ is a minus at time $t$. i.e.\ $\delta_j(t) = \{  S_j = -1\}$. We will write $\mathcal{R}^i_\ell(t)$ and $\delta_j(t)$ simply as $\mathcal{R}^i_\ell$ and $\delta_j$ respectively, to ease the notation. Then we will define $P^t_\ell$ $(\ell=0,1,\ldots,n-1)$ to be
\begin{equation*}
 P^t_\ell = \Pr ( \mathcal{R}^i_\ell \mid  \delta_{i-1})\qquad(i=0,1,\ldots,n)\ .
\end{equation*}
The conditioning on $\delta_{i-1}$ means that the probability $P^t_\ell$ is an upper bound
on $\Pr ( \mathcal{R}^i_\ell)$. This follows since,
if $S_{i-1} = +1$, a plus-run cannot start at $i$. Recall that $\ell = 0$ means
the length of the plus-run is $0$. Hence,  in particular, $P^t_0$  is an upper bound on the
probability that there are minuses at positions $i-1$ and $i$. An advantage of this approximation is that the $P^t_\ell$ are a probability distribution for
$\ell=1,2,\ldots,n$, whereas the quantities $\Pr(\mathcal{R}^i_\ell)$ do not sum to $1$ in general.

Later in the proof, we will need to calculate an upper bound on the probability that two plus
runs are separated by two minuses. That is, we need to calculate an upper bound on the joint
probability $\Pr( \mathcal{R}^i_{\ell} \wedge  \mathcal{R}^j_m)$ where
$i=j+m+2$. But, we have
\begin{equation}
\label{e:conditional1}
\Pr( \mathcal{R}^i_{\ell} \wedge  \mathcal{R}^j_m) = \Pr( \mathcal{R}^i_{\ell} \mid \mathcal{R}^j_m) \Pr( \mathcal{R}^j_m)\ .
\end{equation}
We will use the fact that, conditional on $\delta_r$,
the $S_q$ for $q > r$ and the $S_k$ for $k < r$ are independent,
if the vertices $k$ and $q$ belong to different plus-runs and
there is at least one more plus-run on the cycle.
Under this condition, changes to the $S_q$ occur independently from
those to the $S_k$, since all steps are independent and affect
only two adjacent vertices. The structure of the cycle means that
changes to the $S_q$ can only be percolated to the $S_k$ through the
vertex $i$, on which we have conditioned. Thus, given $\delta_{i-1}$,
the $\mathcal{R}^i_{\ell}$ is conditionally independent of the $\mathcal{R}^j_m$,
provided there is at least another plus-run on the cycle.
The assumption of having at least three plus-runs holds
initially because the game is started with all-minuses,
which means there are $n$ $0$-runs of pluses. Moreover,  we will then
show that it takes exponential time for a plus-run of length $n/4$ to be formed.
To summarise, we may assume
\begin{equation}
\label{e:conditional2}
\Pr( \mathcal{R}^i_{\ell} \mid \mathcal{R}^j_m) = \Pr( \mathcal{R}^i_{\ell} \mid \delta_{i-1})\ .
\end{equation}
Therefore, from \eqref{e:conditional1} and \eqref{e:conditional2}, we have
\begin{equation*}
\Pr( \mathcal{R}^i_{\ell} \wedge  \mathcal{R}^j_m)\,=\,\Pr( \mathcal{R}^i_{\ell} \mid \delta_{i-1}) \Pr( \mathcal{R}^j_m) \,\leq \,\Pr( \mathcal{R}^i_{\ell} \mid \delta_{i-1}) \Pr( \mathcal{R}^j_m\mid \delta_{j-1} ) \,= \,P^t_\ell P^t_m\ .
\end{equation*}

Note that this inequality and the argument are also applicable when one or both runs are of length $0$ and  separated
by one minus, i.e.\ for $\Pr( \mathcal{R}^i_{\ell} \wedge  \mathcal{R}^{i+\ell+1}_0)$ and
$\Pr( \mathcal{R}^i_0 \wedge  \mathcal{R}^{i+2}_0)$. We use this below
without referring further to the details.

We do not explicitly determine a $p_1$ for which slow convergence occurs. Though this is
possible in principle with our methods, the simpler approach we have chosen already leads
to very cumbersome calculations. Our approach, therefore, is to regard $p$ as small, and use the $O$
and $o$ notation to indicate the order of approximations. Thus there will be some small
enough constant $p_1$ for which our results hold, but we cannot estimate it. In order that
the $O$ etc. notation can be applied to both $p$ and $n$ without confusion, we will assume
that $n>e^{1/p}$.

\medskip

We will first consider short runs. For simplicity, we will leave the investigation of a $0_\textrm{c}$-run to the end of this section, and start with $1_\textrm{c}$-runs.

\medskip

\noindent \textbf{A $1_\textrm{c}$-run.}~Let $R_\textrm{c}$ be a $1_\textrm{c}$-run at position $i$, i.e.\ $R_\textrm{c}=[i,i]$. Choosing either of its outer rim
edges causes $R_\textrm{c}$ to be deleted. On the other hand, $R_\textrm{c}$ is created from a $2_\textrm{c}$-run at position $i-1$ if the edge
$\{i-2, i-1 \}$ is selected and from a $2_\textrm{c}$-run at position $i$ if the edge $\{i+1,i+2\}$ is selected. In addition,
$R_\textrm{c}$ is created from three consecutive minuses at positions $(i-1),i, \text{ and } (i+1)$ with probability $p(1-p)$
if either $\{i-1,i \}$ or $\{i,i+1 \}$ is selected. The probability of finding three consecutive minuses is at most $P_0^t$. Combining all this information, we obtain
\begin{equation}
\label{e:probOfSingletonRPS1}
P^{t+1}_1 = P^t_1 + \dfrac{1}{n}  \bigl(-2 P^t_1 + 2 P^t_2 + 2 p (1-p)P^t_0\bigr)\ .
\end{equation}
Note that the coefficient of $P^t_1$ on the right hand side of (\ref{e:probOfSingletonRPS1}) is positive if
$n \ge 2$ and other two variables $P^t_0$ and $P^t_2$ also have positive coefficients. Hence, using the
upper bounds of these three variables yield an upper bound for $P^{t+1}_1$ as required. Also, as we only need
an upper bound, we have ignored the cases where both $i-2$ and $i-1$ are minuses. In that case, choosing the edge
$\{i-2,i-1\}$ causes the $1_\textrm{c}$-run to increase in length by 2 with probability $p^2$ and by 1 with probability $p(1-p)$, effectively deleting $R_\textrm{c}$.
We will perform similar approximations for the other runs investigated below, without mentioning these details further.

The equation (\ref{e:probOfSingletonRPS1}) is a difference equation with time step $1$. Let us rescale so that the new time step is $1/n$. The difference equation corresponding to the new step size is then as follows.
\begin{equation}
\label{e:probOfSingletonRPS2}
P^{\tfrac{t}{n}+\tfrac{1}{n}}_1 = P^{\tfrac{t}{n}}_1 + \dfrac{1}{n}  \biggl(-2 P^{\tfrac{t}{n}}_1 +
2 P^{\tfrac{t}{n}}_2 + 2p(1-p)P^{\tfrac{t}{n}}_0 \biggr)\ .
\end{equation}

Let  $\tau = \tfrac{t}{n}$ and $h = \tfrac{1}{n}$. Then the equation (\ref{e:probOfSingletonRPS2}) can be written as
\begin{equation}
\label{e:probOfSingletonRPS3}
\dfrac{P^{\tau+h}_1 - P^{\tau}_1}{h} = -2 P^{\tau}_1 + 2P^{\tau}_2 + 2p(1-p) P^{\tau}_0\ .
\end{equation}

Now, the difference equation (\ref{e:probOfSingletonRPS3}) can be approximated
by the following differential equation, with error up to $O(h)=O(1/n)=O(e^{-1/p})$, say, on the right hand side.
\begin{equation}
\label{e:diffEqnSingletonRPS}
 \dfrac{d P_1^\tau}{d \tau} = -2P^{\tau}_1 + 2 P^{\tau}_2+ 2p(1-p)P^{\tau}_0\ .
\end{equation}

\medskip
\noindent \textbf{A $2_\textrm{c}$-run.}~Let $R_\textrm{c}$ be a $2_\textrm{c}$-run starting at position $i$, hence $R_\textrm{c}=[i,i+1]$. Similarly to a
$1_\textrm{c}$-run, choosing either of the two outer rim edges of $R_\textrm{c}$ causes the run to decrease in length by 1, reducing the
number of $2_\textrm{c}$-runs on the cycle by 1. $R_\textrm{c}$ is created by
choosing the outer rim edge $\{i-2,i-1\}$ of a $3_c$-run at $(i-1)$. Similarly, $R_\textrm{c}$ is created by
choosing the outer rim edge $\{i+2,i+3\}$ of a $3_c$-run at $i$.  In addition, $R_\textrm{c}$ can be created from a singleton
plus adjacent to a pair of minuses. This happens if the edge connecting the pair of minuses is selected and only the
minus next to the  singleton plus becomes a plus. The probability for this, given that the corresponding edge has
been selected, is $p(1-p)$. Finally, $R_\textrm{c}$ is created with probability $p^2$ by selecting the middle edge of four
consecutive minuses. The probability of having four consecutive minuses at location $i$ is at most ${P^t_0}^2$. Therefore we get
\begin{equation}
\label{e:probOfPairRPS1}
P^{t+1}_2 = P^t_2 + \dfrac{1}{n}  \Bigl(-2P^t_2 +
2P^t_3 + 2p(1-p)P^t_1P^t_0 + p^2  {P^t_0}^2 \Bigr)\ .
\end{equation}
As before, rescaling and approximating, we obtain
\begin{equation}
\label{e:diffEqnPairRPS}
\dfrac{d P_2^\tau}{d \tau}  = -2P^{\tau}_2 +
2P^{\tau}_3 + 2p(1-p)P^{\tau}_1P^{\tau}_0 + p^2 {P^{\tau}_0}^2\ .
\end{equation}

\medskip
\noindent \textbf{An $\ell_\textrm{c}$-run, where $ \ell \geq 3.$}~
Suppose $R_c = [i,j]$ is an $\ell_\textrm{c}$-run for some $\ell \ge 3$. Selecting either of the two outer rim edges
causes  $R_c$ to decrease in length by $1$. On the other hand, an $(\ell+1)_c$-run starting at position $(i-1)$
is turned into an $\ell_\textrm{c}$-run starting at position $i$ if the edge $\{i-2,i-1\}$ is chosen; and, an $(\ell+1)_c$-run
starting at position $i$ becomes an $\ell_\textrm{c}$-run starting at the same position $i$ if the edge $\{j+1, j+2\}$ is chosen.
Also, if there is a $0_\textrm{c}$-run at $i$ and an $(\ell-1)_c$-run at $i+1$, choosing the edge $\{i-1,i\}$ will create an
$\ell_\textrm{c}$-run starting at location $i$ with probability $p(1-p)$. We will get the same result if these two runs are
in the reverse order: $(\ell-1)_c$-run at $i$ and a $0_\textrm{c}$-run at $j$. If there is an $(\ell-2)_c$-run starting at $(i+2)$
and there are minuses at $i-1$, $i$, and $i+1$,  then choosing the edge $\{i, i+1\}$ produces an $\ell$-run at $i$  with
probability $p^2$. Similarly if there is an $(\ell-2)_c$-run at position $i$ and there are minuses at positions $j-1$, $j$
and $j+1$, the run increases in length by $2$ with probability $p^2$ if the edge $\{j-1,j\}$ is selected. Finally a $k_c$-run
and an $(\ell-2-k)_c$-run, $1 \le k \le \ell-3$, at positions $i$ and $(i+k+2)$ respectively merge with probability $p^2$, introducing an $\ell_\textrm{c}$-run, if
the edge between the runs, namely $\{i+k,i+k+1\}$, is selected.  Thus we have
\begin{equation*}
P^{t+1}_{\ell} = P^t_{\ell} + \dfrac{1}{n}  \biggl(-2 P^t_{\ell} +
2P^t_{\ell+1} + 2p(1-p) P^t_{\ell-1} P^t_0 + 2p^2  P^t_{\ell-2} P^t_0 +p^2
\sum_{k=1}^{\ell-3}
P^t_k P^t_{\ell-2-k}\biggr)\ .
\end{equation*}

This could be written as
\begin{equation}
\label{e:probofEllRunRPS1}
P^{t+1}_{\ell}
= P^t_{\ell} + \dfrac{1}{n}  \biggl(-2P^t_{\ell} +
2P^t_{\ell+1} + 2p(1-p) P^t_{\ell-1} P^t_0 +p^2
\sum_{k=0}^{\ell-2}
 P^t_k  P^t_{\ell-2-k}\biggr)\ .
\end{equation}
Here also, we have used the fact that the probability of finding three consecutive minuses is at most $P_0^t$. Observe that, in this form, the difference equation (\ref{e:probOfPairRPS1})
 is equivalent to (\ref{e:probofEllRunRPS1}) when $\ell = 2$. Therefore, we can use (\ref{e:probofEllRunRPS1}) for $\ell = 2$ also.

\medskip
 \noindent \textbf{A $0_\textrm{c}$-run.}~Finally, consider a run of length zero, i.e.\ a $0_\textrm{c}$-run. Recall that we have defined $P^t_0$  and $P^t_1$ to be upper bounds on the probability of finding a $0_\textrm{c}$-run and $1_\textrm{c}$-run respectively, at position $i$ at time $t$.
 Now let $\bar{P}^t_0$ and $\bar{P}^t_1$ denote the exact values of these probabilities respectively, i.e.\ $\bar{P}^t_0 = \Pr(\mathcal{R}^i_0) $  and $\bar{P}^t_1 = \Pr(\mathcal{R}^i_1)$. We can now examine the dynamics of a $0_\textrm{c}$-run. A $0_\textrm{c}$-run at position $i$ means there are minuses at
 positions $(i-1)$ and $i$. Then, $i+1$ can be a minus or a plus. It is not difficult to verify that, if it is a minus then the
 $0_\textrm{c}$-run might be deleted with probability $(3p-p^2)/n$, and if it is a plus , the $0_\textrm{c}$-run might be deleted with probability
 $( 2p-p^2)/n $. Also note that probability of finding each of these configurations is at most $\bar{P}^t_0$. On the creation
 side, a $0_\textrm{c}$-run at $i$ could be created from a $1_\textrm{c}$-run at $i$ with probability $2/n$ and from any longer plus-runs at position
 $i$ with probability $1/n$. By definition, the probability of finding a $1_\textrm{c}$-run at $i$ is $\bar{P}^t_1$.  It then follows that
 the probability of finding a plus-run of length greater than $2$ at position $i$ is $(1-\bar{P}^t_0-\bar{P}^t_1)$ . Hence we obtain

\begin{equation*}
P^{t+1}_0 =  \bar{P}^t_0 + \dfrac{1}{n}  \bigl(-\bar{P}^t_0 (3p-p^2)-\bar{P}^t_0 ( 2p-p^2)
+ 2\bar{P}_1^t +(1-\bar{P}^t_0-\bar{P}^t_1)  \bigr)\ .
\end{equation*}

Thus
\begin{equation}
\label{e:probOfZeroRPS1}
P^{t+1}_0 =  \bar{P}^t_0 \biggl( 1 - \dfrac{1+5p-2p^2}{n}  \biggr) + \dfrac{1}{n} (1 + \bar{P}_1^t).
\end{equation}
Note that $P^{t+1}_0$ in (\ref{e:probOfZeroRPS1}) is an upper bound. Furthermore, the coefficient of
$\bar{P}^t_0$ is positive when $n \ge 1+5p-2p^2$, and the coefficient of $\bar{P}_1^t$
is also positive. We can therefore replace $\bar{P}^t_0$ and $\bar{P}^t_1$ with their upper bounds $P^t_0$
and $P^t_1$ respectively, obtaining
\begin{equation}
\label{e:probOfZeroRPS2}
P^{t+1}_0 =  P^t_0 + \dfrac{1}{n}  \bigl(-(1+5p-2p^2)P^t_0 + P_1^t + 1 \bigr)\ .
\end{equation}
Hence we get
\begin{equation}
\label{e:diffEqnZeroRunRPS}
\dfrac{d P_0^\tau}{d \tau} = -(1+5p-2p^2)P^{\tau}_0 + P_1^{\tau} + 1\ .
\end{equation}

\subsection{The analysis}
In the previous section we modelled the game dynamics by a set of differential equations.
We first solve the ones corresponding to the runs of length shorter than $3$.
\begin{lemma}
\label{lem:specificSolutionRPS}
If the game is started in the all-minuses configuration, the solution to the system of
differential equations \eqref{e:diffEqnSingletonRPS},  \eqref{e:diffEqnPairRPS}
and \eqref{e:diffEqnZeroRunRPS} is given by
\scriptsize
\begin{equation*}
\begin{bmatrix}
P_0^\tau\\[1ex]
P_1^\tau\\[1ex]
P_2^\tau
\end{bmatrix}
=
\begin{bmatrix}
1 + (-4+3 e^{-\tau} + e^{-2\tau} ) p+ \left(\tfrac {37}{2}+ ( 5 e^{-2\tau}-9 {e^{-\tau}}
) \tau + 2 e^{-2\tau}{\tau}^2-31 e^{-\tau}+\tfrac {25}{2}e^{-2\tau}
\right) p^2+o( p^2)\ \\[0.5ex]
( 1-e^{-2\tau} ) p + \left(-\tfrac {7}{2}+6e^{-\tau}-\tfrac {5}{2}e^{-2\tau}-e^{-2\tau}\tau
-2e^{-2\tau}{\tau}^2 \right) p^2+o ( p^2)\ \\[0.5ex]
\left( \tfrac {3}{2}-\tfrac {3}{2}e^{-2\tau}-2 e^{-2\tau} \tau \right) p^2+o(p^2)
\end{bmatrix}\ .
\end{equation*}

\end{lemma}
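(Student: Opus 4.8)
The plan is to treat the forgiveness parameter $p$ as small and solve the system by a regular perturbation expansion in powers of $p$. I would write $P_0^\tau=\sum_{k\ge0}a_k(\tau)p^k$, $P_1^\tau=\sum_{k\ge0}b_k(\tau)p^k$ and $P_2^\tau=\sum_{k\ge0}c_k(\tau)p^k$, so that the goal becomes the determination of $a_k,b_k,c_k$ for $k\le2$ together with a bound showing the truncation error is $o(p^2)$. The first thing to confront is that \eqref{e:diffEqnPairRPS} also contains $P_3^\tau$, so the three equations do not, as written, form a closed system; this has to be dealt with before any expansion can be carried out.

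To close the system I would establish the structural fact that $P_\ell^\tau=O(p^\ell)$; equivalently, the coefficient of $p^k$ in $P_\ell^\tau$ vanishes whenever $\ell>k$. This follows by induction from the update rule \eqref{e:probofEllRunRPS1}: granting $P_k^\tau=O(p^k)$ for smaller indices, the creation terms $2p(1-p)P_{\ell-1}^\tau P_0^\tau$ and $p^2\sum_{k=0}^{\ell-2}P_k^\tau P_{\ell-2-k}^\tau$ are each $O(p^\ell)$, the base cases being $P_0^\tau=O(1)$ and $P_1^\tau=O(p)$, and the all-minuses start giving the vanishing initial data $P_\ell^0=0$ for $\ell\ge1$ so that no $O(1)$ term survives. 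Two consequences follow. First, $P_3^\tau=O(p^3)=o(p^2)$, so the term $2P_3^\tau$ in \eqref{e:diffEqnPairRPS} may be discarded to the order required, and the three equations form a closed system modulo $o(p^2)$. Second, the coupling $2P_{\ell+1}^\tau$ never contributes at the order at which $P_\ell^\tau$ first appears, so the hierarchy becomes finite and lower-triangular at each fixed power of $p$.

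Substituting the expansions and matching coefficients of $p^0$, $p^1$ and $p^2$ then produces, at each order, a short cascade of scalar linear first-order ODEs with constant homogeneous part. At a given order the forcing of the $c_k$ equation (for $P_2$) involves only already-determined quantities, the $b_k$ equation (for $P_1$) depends on $c_k$ and lower orders, and the $a_k$ equation (for $P_0$) depends on $b_k$ and lower orders; I would therefore solve them in the order $c_k,b_k,a_k$ by integrating factors, imposing $a_0(0)=1$ and $a_k(0)=b_k(0)=c_k(0)=0$ otherwise, as read off from $P_0^0=1$ and $P_1^0=P_2^0=0$. The zeroth order gives $a_0\equiv1$, $b_0\equiv c_0\equiv0$; the first order gives $c_1\equiv0$, $b_1=1-e^{-2\tau}$ and $a_1=-4+3e^{-\tau}+e^{-2\tau}$; and the second order yields the remaining entries of the stated vector.

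The main obstacle is purely computational and concentrated at second order, where resonances appear. The $P_1$ and $P_2$ equations have homogeneous mode $e^{-2\tau}$, which coincides with modes in their forcing and so produces the secular factors $\tau e^{-2\tau}$; moreover the forcing of $b_2$ itself contains a $\tau e^{-2\tau}$ term, triggering a second resonance that generates the $\tau^2e^{-2\tau}$ terms visible in $P_0^\tau$ and $P_1^\tau$. Independently, the $P_0$ equation has homogeneous mode $e^{-\tau}$, which resonates with the $e^{-\tau}$ contribution inherited from $a_1$ and so produces the $\tau e^{-\tau}$ term in $a_2$. Handling these resonant particular integrals correctly, and then collecting the constant, $e^{-\tau}$ and $e^{-2\tau}$ contributions, is where the algebra is heaviest. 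Since the truncated system is linear with smooth coefficients, its initial-value problem has a unique solution; as an independent safeguard I would substitute the claimed formulas back into \eqref{e:diffEqnSingletonRPS}, \eqref{e:diffEqnPairRPS} and \eqref{e:diffEqnZeroRunRPS} and verify both the equations up to $o(p^2)$ and the initial conditions.
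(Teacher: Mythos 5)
Your computational core is sound and genuinely different from the paper's route: the paper linearises \eqref{e:diffEqnPairRPS} by \emph{assuming} $P_0^\tau=1+O(p)$, $P_1^\tau=O(p)$, $P_3^\tau=o(p^2)$, and then solves the resulting $3\times3$ constant-coefficient system exactly by eigen-decomposition (with eigenvalues $-2\pm2\sqrt{p}+\cdots$ and eigenvector expansions in powers of $\sqrt{p}$), recovering integer powers of $p$ and the secular terms only after recombination; your order-by-order cascade in $p^0,p^1,p^2$, solved in the order $c_k,b_k,a_k$ with resonances producing the $\tau e^{-2\tau}$, $\tau^2 e^{-2\tau}$ and $\tau e^{-\tau}$ terms, reaches the same formulas more directly and avoids fractional powers altogether. (Your first-order output $b_1=1-e^{-2\tau}$, $a_1=-4+3e^{-\tau}+e^{-2\tau}$ indeed matches the stated solution.)

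However, there is a genuine gap in your closure step. You justify discarding $2P_3^\tau$ from \eqref{e:diffEqnPairRPS} by the ``structural fact'' $P_\ell^\tau=O(p^\ell)$, proved ``by induction'' from \eqref{e:probofEllRunRPS1}. That induction is circular: the equation for $P_\ell$ contains the term $2P_{\ell+1}^\tau$, an \emph{upward} coupling, so knowing the orders of all smaller indices never discharges it. Your remark that ``the coupling $2P_{\ell+1}^\tau$ never contributes at the order at which $P_\ell^\tau$ first appears'' presupposes $P_{\ell+1}^\tau=O(p^{\ell+1})$, which is exactly what remains to be proved, and the regress does not terminate: the trivial bound $P_{\ell+1}^\tau\le1$ fed through the Duhamel formula $P_\ell(\tau)=\int_0^\tau e^{-2(\tau-s)}\bigl(2P_{\ell+1}(s)+\cdots\bigr)\,ds$ returns a factor $\int_0^\tau 2e^{-2(\tau-s)}\,ds\le1$ at each stage, so iterating up the hierarchy loses nothing and gains nothing. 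Your final safeguard of substituting the formulas back does not repair this, because the verification is only against the \emph{truncated} system; the discrepancy with the true hierarchy is precisely the discarded $2P_3^\tau$ term whose size is in question. The paper is aware of this difficulty and does not attempt your induction: it assumes $P_3^\tau=o(p^2)$, and validates that assumption a posteriori in Lemma~\ref{lem:upperBoundsRPS} by a separate generating-function argument (Lemma~\ref{lem:generatingfunctionRPS}), comparing $g(\tau)=\sum_\ell P_\ell^\tau$ with $P_0^\tau+P_1^\tau+P_2^\tau$ to conclude $\sum_{\ell\ge3}P_\ell^\tau=o(p^2)$. To complete your proof you must either import that tail bound or supply some other a priori control on $\sum_{\ell\ge3}P_\ell^\tau$ (e.g.\ exponential decay in $\ell$ from convergence of $F(x,t)$ for some $x>1$); the perturbation cascade alone cannot close the hierarchy.
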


\begin{proof}
Note that the differential equations (\ref{e:diffEqnSingletonRPS}) and (\ref{e:diffEqnZeroRunRPS})   are linear, while
(\ref{e:diffEqnPairRPS}) is nonlinear. Fortunately, we can approximately linearise  \eqref{e:diffEqnPairRPS} using some knowledge of the system.

We approximate the solutions with error terms $o(p^2)$. Then, assuming $P_0^\tau = 1+O(p)$, $P_1^\tau = O(p)$ and $P_3^\tau = o(p^2)$
linearises (\ref{e:diffEqnPairRPS}). The linearised version is given by
\begin{equation*}
\dfrac{d P_2^\tau}{d \tau}  = 2p(1-p)P^{\tau}_1 -2 P^{\tau}_2 + p^2 +o(p^2)\ .
\end{equation*}

Hence, for short runs, we have the following nonhomogeneous linear system of first order differential equations.
\begin{align}
\nonumber
\dfrac{d P_0^\tau}{d \tau} &= -(1+5p-2p^2)P^{\tau}_0 + P_1^{\tau} + 1 \ . \\
\nonumber
\dfrac{d P_1^\tau}{d \tau} &= 2p(1-p)P^{\tau}_0-2P^{\tau}_1 + 2P^{\tau}_2 \ .\\
\nonumber
\dfrac{d P_2^\tau}{d \tau} &= 2p(1-p)P^{\tau}_1-2P^{\tau}_2 + p^2 +o(p^2) \ .
\end{align}

In matrix form, the system can be written as

\begin{equation*}
\dfrac{d }{d \tau}
\begin{bmatrix}
P_0^\tau\\
P_1^\tau\\
P_2^\tau
\end{bmatrix}
=
\begin{bmatrix}
-(1+5p-2p^2) & 1 & 0\\
2p(1-p) & -2 & 2\\
0 & 2p(1-p) & -2
\end{bmatrix}
\begin{bmatrix}
P_0^\tau\\
P_1^\tau\\
P_2^\tau
\end{bmatrix}
+
\begin{bmatrix}
1\\
0\\
p^2+o(p^2)
\end{bmatrix}\ .
\end{equation*}

Let us denote this system by
\begin{equation}
\label{e:coupledDiff}
{\Pt}' = \At \Pt+\Ft.
\end{equation}
Since the game is started  with the all-minuses configuration, we have the initial condition
\begin{equation*}
\Pt(0) = \begin{bmatrix} 1\\0\\0\end{bmatrix}\ .
\end{equation*}
Thus we have an initial value problem which we solve by the method of decoupling.
We first find the eigenvalues and eigenvectors of $ \At$. The characteristic polynomial of $ \At$
is
\begin{equation*}
{\lambda}^3+ (5+5p-2p^2) {\lambda}^2+( 8+14p-2p^2) \lambda +4 +12p-20p^2+28p^3-8p^4\ .
\end{equation*}

An analysis of this cubic polynomial shows that all three roots are real,
different, and negative for small $p$. The eigenvalues of $\At$ are
\begin{equation*}
\begin{bmatrix}
\lambda_1\\
\lambda_2\\
\lambda_3
\end{bmatrix}
=
\begin{bmatrix} -1-3p+14p^2+o(p^2) \ \\[0.5ex]
 -2-2\sqrt{p}-p+\tfrac{11}{4}p^{3/2}-6p^2+o (p^2)\ \\[0.5ex]
-2+2\sqrt{p}-p-\tfrac{11}{4}p^{3/2}-6p^2+o(p^2)
\end{bmatrix}\ .
 \end{equation*}

Now, the eigenvector of $\At$ corresponding to eigenvalue $\lambda_1$ is
\begin{align*}
\mathbf{e}_1\ &=\ \begin{bmatrix}\ \tfrac{1}{4}p^{-2}-2p^{-1}+6+O(p)\ \\
\tfrac{1}{2}p^{-1}-1+6p+O(p^2)\\
1
\end{bmatrix}\\
&=\ \frac{1}{4p^2}\begin {bmatrix}\ 1-8p+24p^2 +O(p^3)\ \\
2p-4p^2 +O(p^3)\\
4p^2
\end{bmatrix}.
\end{align*}
The eigenvector corresponding to eigenvalue $\lambda_2$ is
\begin{align*}
\mathbf{e}_2\ &=\ \begin{bmatrix}\ p^{-1/2}-\frac{3}{2}+\tfrac{53}{8}p^{1/2}-13p+\tfrac{4167}{128}p^{3/2}
-\tfrac{101}{2}p^2+\tfrac{40525}{1024}p^{5/2}+O(p^3)\ \\[0.5ex]
 -p^{-1/2}-\frac{1}{2}+\tfrac{3}{8}p^{1/2}-\tfrac{7}{2}p+
\tfrac{1001}{128}p^{3/2}-\tfrac{43}{2}p^2+\tfrac{45627}{1024}p^{5/2}+O(p^3)\ \\[0.5ex]
1
\end{bmatrix}\\[1ex]
&=\ \frac{1}{\sqrt{p}}\begin{bmatrix}\ 1-\frac{3}{2}\sqrt{p}+\tfrac{53}{8}p-13p^{3/2}
+\tfrac{4167}{128}p^2-\tfrac{101}{2}p^{5/2}+O(p^3)\ \\[0.5ex]
-1-\frac{1}{2}\sqrt{p}+\tfrac{3}{8}p-\frac{7}{2}p^{3/2}+\tfrac{1001}{128}p^2
-\tfrac{43}{2}p^{5/2} +O(p^3)\\[0.5ex]
\sqrt{p}
\end{bmatrix}.
\end{align*}
Finally, the eigenvector corresponding to eigenvalue $\lambda_3$ is

\begin{align*}
\mathbf{e}_3\ &=\ \begin{bmatrix} \ -{\tfrac {1}{\sqrt {p}}}-\tfrac{3}{2}-\tfrac {53}{8}
\sqrt {p}-13 p-{\tfrac {4167}{128}}p^{3/2}-\tfrac {101}{2}p^2-\tfrac {40525}{1024}p^{5/2}+O ( p^3)\ \\[0.5ex]
\ \tfrac {1}{\sqrt {p}}-\frac{1}{2}-\tfrac{3}{8}\sqrt {p}-\frac{7}{2}p-\tfrac {1001}{128}p^{3/2}-\tfrac {43}{2}p^2-\tfrac {45627}{1024}{p}^{5/2}+O ( p^3) \ \\[0.5ex]
1
\end{bmatrix}\\[1ex]
&=\ \frac{1}{\sqrt{p}}\begin{bmatrix}\
-1-\frac{3}{2}\sqrt {p}-{\tfrac {53}{8}}p-13p^{3/2}-\tfrac {4167}{128}p^2-\tfrac {101}{2}p^{5/2} +O(p^3) \ \\[0.5ex]
1-\frac{1}{2}\sqrt {p}-\tfrac{3}{8}p-\frac{7}{2}{p}^{3/2}-{\tfrac {
1001}{128}}p^2-{\tfrac {43}{2}}p^{5/2}+ O ( p^3)\ \\[0.5ex]
\sqrt {p}
\end{bmatrix}.
\end{align*}

We now form the matrix $\Tt$ whose columns are constant multiples of the eigenvectors of $\At$. That is
\begin{equation*}
\Tt = \quad [ \quad 4p^2\mathbf{e}_1 \quad\quad \sqrt {p}\mathbf{e}_2 \quad\quad \sqrt {p}\mathbf{e}_3\quad]\ .
\end{equation*}

Since all three eigenvalues are different, the eigenvectors $\mathbf{e}_1, \mathbf{e}_2, \text{ and } \mathbf{e}_3$  are linearly
independent. Hence the matrix $\Tt $ is non-singular and $\Tt^{-1}$ exists. Let us calculate the determinant
of $\Tt$ to confirm that the approximated $\Tt$ is non-singular.
\begin{equation*}
\det \Tt = -2\sqrt {p}+\tfrac {51}{4}p^{3/2}-\tfrac {4663}{64}p^{5/2} \ne 0\ .
\end{equation*}
Now, we calculate the inverse of the matrix $\Tt$. Since the determinant of $\Tt$ is $O(p^{1/2})$ and
 $\Tt$ is accurate up to $O(p^{5/2})$, $\Tt^{-1}$ will be correct up to $O(p^2)$.

\scriptsize
\begin{equation*}
\Tt^{-1} = \begin{bmatrix}\ 1+6p-6p^2+o(p^2)&1+13p+79p^2+o(p^2)&2+32
p+226p^2+o(p^2)\ \\[0.5ex]
p-2p^{3/2}+\tfrac {43}{8}p^2+o(p^2)&-\tfrac{1}{2}+\tfrac {13}{16}p-2p^{3/2}+\tfrac {2149
}{256}p^2 +o(p^2)&\tfrac {1}{2\sqrt {p}}-\tfrac{1}{4}+\tfrac {5}{32}p-4p^{3/2}+\tfrac {8885}{512}p^2 +o(p^2)\ \\[0.5ex]
-p-2p^{3/2}-\tfrac {43}{8}p^2+o(p^2)&\tfrac{1}{2}-\tfrac {13}{16}p-2p^{3/2}-\tfrac {2149}{256}p^2+o(p^2)&\tfrac {1}{2\sqrt {p}}+\tfrac{1}{4}-\tfrac {5}{32}p-4p^{3/2}-\tfrac {8885}{512}p^2+o(p^2)\ \\[0.5ex]
\end{bmatrix}\ .
\end{equation*}

\normalsize
Ignoring the error terms, we can verify that $\Tt^{-1}\At\Tt$ is a diagonal matrix whose diagonal
elements are the eigenvalues of $\At$, concurring with the theory. That is
\begin{equation*}
\Tt^{-1}\At\Tt = \begin{bmatrix}\
\lambda_1&0&0\ \\
0&\lambda_2&0\ \\
0&0&\lambda_3
\end{bmatrix}\ .
\end{equation*}

 Let $\Pt = \Tt \Yt$.   Then we have a new system of differential equations given by
\begin{equation}
\label{e:decoupledDiff}
  \Yt^{'} = \Dt\Yt + \Gt,
\end{equation}
with initial condition $\Yt(0) = \Tt^{-1}\Pt(0)$ where $\Dt = \Tt^{-1}\At\Tt$ and $\Gt = \Tt^{-1}\Ft$. Hence

\begin{equation*}
\Yt(0) = \Tt^{-1} \Pt(0)
=
\begin{bmatrix}\ 1+6p-6p^2+o(p^2) \\
p-2p^{3/2}+\tfrac {43}{8}p^2+o(p^2)\ \\[0.5ex]
-p-2p^{3/2}-\tfrac{43}{8}p^2+o(p^2)
\end{bmatrix}\ .
\end{equation*}

 We also know that

\begin{equation*}
\Ft =  \begin{bmatrix}\
1\ \\
0\ \\
p^2 +o(p^2)
\end{bmatrix}\ .
\end{equation*}

Thus

\begin{equation*}
\Gt = \Tt^{-1} \Ft = \begin{bmatrix}\ 1+6p-4p^2+o(p^2)\ \\
p-\tfrac{3}{2}p^{3/2}+\tfrac {41}{8}p^2+o(p^2) \ \\[0.5ex]
-p -\tfrac{3}{2}p^{3/2}-\tfrac {41}{8}p^2+o(p^2)
\end{bmatrix}.
 \end{equation*}

Now, solving the three decoupled differential equations \eqref{e:decoupledDiff} yields

\scriptsize
\begin{equation*}
\Yt =  \begin{bmatrix}\  1+ ( 3+3e^{-\tau} ) p+ ( 1
-7e^{-\tau}-9e^{-\tau}\tau ) p^2+o(p^2)\ \\[0.5ex]
\left( \tfrac{1}{2}+ \tfrac{1}{2}e^{-2\tau} \right) p+ \left( -e^{-2\tau}\tau- \tfrac{3}{4}e^{-2\tau}-
 \tfrac{5}{4} \right) p^{3/2}+ \left( e^{-2\tau}\tau+e^{-2\tau}{\tau}^2+\tfrac {
29}{16}e^{-2\tau}+\tfrac {57}{16} \right) p^2+o(p^2)
\ \\[0.5ex]
\left( - \tfrac{1}{2}- \tfrac{1}{2}e^{-2\tau} \right) p+ \left( -e^{-2\tau}\tau- \tfrac{3}{4}e^{-2\tau}- \tfrac{5}{4} \right) p^{3/2}+ \left( -e^{-2\tau}
\tau-{e^{-2\tau}}{\tau}^2-\tfrac {29}{16}e^{-2\tau}-\tfrac {57}{16}
 \right) p^2 +o(p^2)\end{bmatrix}\ .
 \end{equation*}

\normalsize
Finally, the solution for \eqref{e:coupledDiff} can be computed using $\Pt = \Tt\Yt$.  What is remaining to be shown
is that the three assumptions used in the proof are valid. They are: $P_3^\tau = o(p^2),\ P_1^\tau = O(p)$
and $P_0^\tau = 1-O(p)$. The assumption on $P_3^\tau$ is validated in Lemma~\ref{lem:upperBoundsRPS}.
Let us consider the other two here. The final solution confirms that our assumptions are valid at any time
$\tau$ if they were valid initially. Clearly the assumptions hold initially as, at time $\tau=0$, we have
$P_0^0 = 1$ and  $P_1^0 = 0$. Hence, the final solution holds for any $\tau$ and the proof is complete.
\end{proof}

We will use generating functions to solve the recurrence (\ref{e:probofEllRunRPS1}).
Let the function $F(x,t)$ be defined by
\begin{equation*}
F(x,t) = \sum_{\ell = 0}^{\infty} P_\ell^t x^{\ell}\ .
\end{equation*}

Now, multiplying (\ref{e:probofEllRunRPS1}) by $x^{\ell+1}$ and summing over all $\ell \ge 2$,
we obtain
\begin{equation}
\label{e:recurrence1}
\begin{split}
\sum_{\ell = 2}^{\infty} P^{t+1}_{\ell}x^{\ell+1} = &\sum_{\ell = 2}^{\infty}P^t_{\ell}x^{\ell+1}
+ \dfrac{1}{n}  \biggl( -2 \sum_{\ell = 2}^{\infty} P^t_{\ell}x^{\ell+1} +
2\sum_{\ell = 2}^{\infty} P^t_{\ell+1}x^{\ell+1} \\ &+ 2p(1-p)\sum_{\ell = 2}^{\infty}P^t_{\ell-1}
P^t_0 x^{\ell+1} +p^2\sum_{\ell = 2}^{\infty}
x^{\ell+1} \sum_{k=0}^{\ell-2}P^t_k P^t_{\ell-2-k}\biggr)\ .
\end{split}
\end{equation}
The indices of (\ref{e:recurrence1}) can be adjusted to get
\begin{equation}
\label{e:recurrence2}
\begin{split}
x\sum_{i = 2}^{\infty} P^{t+1}_ix^i &= x\sum_{i = 2}^{\infty}P^t_ix^i + \dfrac{1}{n}  \biggl(-2x\sum_{i = 2}^{\infty} P^t_ix^i +
2\sum_{i = 3}^{\infty} P^t_ix^i\\ &+ 2p(1-p)x^2 P^t_0 \sum_{i = 1}^{\infty} P^t_i x^i +p^2\sum_{i=0}^{\infty}x^{i+3} \sum_{k=0}^iP^t_k P^t_{i-k}\biggr)\ .
\end{split}
\end{equation}
Note that the last term in (\ref{e:recurrence2}) can be thought of as relating the sequence $P_\ell^t$ to its own convolution, thus can be replaced by their product, obtaining
\begin{equation*}
\begin{split}
x\sum_{i = 2}^{\infty} P^{t+1}_ix^i = &x\sum_{i = 2}^{\infty}P^t_ix^i  + \dfrac{1}{n}  \biggl(-2x \sum_{i = 2}^{\infty} P^t_ix^i +
2\sum_{i = 3}^{\infty} P^t_ix^i \\ &+  2p(1-p)x^2P^t_0 \sum_{i = 1}^{\infty} P^t_i  x^i +p^2x^3\biggl(\sum_{n=0}^{\infty} P^t_nx^n\biggr)\biggl(\sum_{n=0}^{\infty} P^t_nx^n\biggr)\biggr)\ .
\end{split}
\end{equation*}
Hence
\begin{equation*}
\begin{split}
x&\bigl(F(x,t+1)-P_0^{t+1}-P_1^{t+1}x\bigr) = x\bigl(F(x,t)-P_0^t-P_1^t x\bigr) + \dfrac{1}{n} \bigl(-2x\bigl(F(x,t)-P_0^t-P_1^t x\bigl)  \\ &+
2\bigl(F(x,t)-P_0^t-P_1^tx-P_2^tx^2\bigr)  + 2p(1-p)x^2 P^t_0 \bigl( F(x,t) - P^t_0\bigl) +p^2x^3F(x,t)^2 \bigr)\ .
\end{split}
\end{equation*}
This can be rearranged to get
\begin{equation*}
\begin{split}
x&\dfrac{F(x,t+1) -  F(x,t)}{\tfrac{1}{n}} -  x\dfrac{(P_0^{t+1} - P_0^t)}{\tfrac{1}{n}} - x^2 \dfrac{(P_1^{t+1} - P_1^t)}{\tfrac{1}{n}} =  -2x\bigl(F(x,t)-P_0^t-P_1^tx\bigl)  \\ &+
2\bigl(F(x,t)- P_0^t - P_1^tx-P_2^tx^2\bigl)  + 2p(1-p)x^2 P^t_0\bigl( F(x,t) - P^t_0\bigr)\bigr) + p^2x^3F(x,t)^2\ .
\end{split}
\end{equation*}

Substituting (\ref{e:probOfSingletonRPS1}) and (\ref{e:probOfZeroRPS2}) into the above equation yields
\begin{equation}
\label{e:diffEqnFinalRPS1}
\begin{split}
x &\dfrac{F(x,t+1)-  F(x,t)}{\tfrac{1}{n}} = p^2x^3F(x,t)^2 + 2F(x,t)\bigl(1-x+x^2P_0^tp(1-p)\bigl)   \\
&-2x^2p{P_0^t}^2(1-p)+P_0\bigl(x(1-5p+2p^2)+2x^2p(1-p) -2\bigl) -P_1^tx + x\ .
\end{split}
\end{equation}

Now, let $y(\tau)= F(x, t)$ where $\tau = \tfrac{t}{n}$ as defined before. Then, approximating and rescaling, we get
\begin{equation}
\label{e:diffEqnFinalRPS}
\begin{split}
x\dfrac{d y(\tau)}{d\tau} = p^2&x^3y(\tau)^2 + 2y(\tau)\bigl(1-x+x^2P_0^\tau p(1-p)\bigr)-2x^2{P_0^\tau}^2 p(1-p)\\
&+P_0^\tau\bigl(x(1-5p+2p^2)+2x^2p(1-p) -2\bigr)  -P_1^\tau x + x\ .
\end{split}
\end{equation}

The following lemma proves that $y(\tau)$ has a radius of convergence greater than 1.

\normalsize
\begin{lemma}
\label{lem:generatingfunctionRPS}
The generating function $y(\tau)$ is bounded above and converges for
some $x > 1$.
\end{lemma}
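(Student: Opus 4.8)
The plan is to treat \eqref{e:diffEqnFinalRPS} as a Riccati equation in the single variable $y(\tau)$, for each fixed $x$, and to show that for a suitable $x>1$ the solution starting from the all-minuses initial condition stays bounded for all $\tau$. First I would substitute the explicit expressions for $P_0^\tau$ and $P_1^\tau$ from Lemma~\ref{lem:specificSolutionRPS}; these are bounded uniformly in $\tau$, with $P_0^\tau = 1+O(p)$ and $P_1^\tau = O(p)$. This turns \eqref{e:diffEqnFinalRPS} into $\frac{dy}{d\tau} = A(\tau)y^2 + B(\tau)y + C(\tau)$, where $A = p^2x^2 > 0$, $B = \frac{2}{x}\bigl(1-x+x^2 P_0^\tau p(1-p)\bigr)$, and $C$ collects the remaining (bounded) terms.

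Because the game starts in the all-minuses state, only the $0_\textrm{c}$-run has positive probability initially, so $P_0^0 = 1$ and $P_\ell^0 = 0$ for $\ell \ge 1$; hence $y(0)=F(x,0)=1$. The key is that the coefficient $A$ of the destabilising quadratic term is only $O(p^2)$. I would therefore choose $x = 1 + \Theta(p)$, i.e.\ just above $1$, so that $B$ is a stabilising quantity of order $p$, negative once the multiple of $p$ in $x$ is taken large enough, while $C=\Theta(p)>0$. For the frozen steady-state coefficients, the fixed-point quadratic $A y^2 + B y + C = 0$ then has discriminant $B^2 - 4AC = \Theta(p^2) - \Theta(p^3) > 0$, giving two real fixed points: a stable one $y_- \approx 1$ and an unstable one $y_+ = \Theta(1/p)$.

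The heart of the argument is a barrier/comparison step. Since $A>0$, the Riccati flow satisfies $\frac{dy}{d\tau}<0$ exactly on $(y_-,y_+)$ and $\frac{dy}{d\tau}>0$ outside it, so $y_-$ is attracting and $y_+$ is repelling; any trajectory starting below $y_+$ converges to $y_-$ and remains bounded, whereas one starting above $y_+$ escapes to $+\infty$. As $y(0)=1$ is far below $y_+=\Theta(1/p)$ for small $p$, the solution stays trapped between $y_-$ and $1$ and converges. To handle the genuine $\tau$-dependence of $B$ and $C$, I would replace them by $\tau$-uniform worst-case constants — legitimate because Lemma~\ref{lem:specificSolutionRPS} keeps $P_0^\tau$ and $P_1^\tau$ within $O(p)$ of their limits for all $\tau$ — and dominate $y(\tau)$ by the solution of the resulting autonomous Riccati equation, whose unstable fixed point is still $\Theta(1/p)$.

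The main obstacle is ruling out finite-time blow-up, which is a real danger for a Riccati equation with a positive $y^2$-coefficient. Everything hinges on the quantitative balance at $x=1+\Theta(p)$: the $O(p^2)$ quadratic coefficient must be small enough, relative to the $\Theta(p)$ linear coefficient, that the repelling fixed point $y_+$ sits at height $\Theta(1/p)$, safely above the initial value $1$. Verifying the sign of the discriminant and the location of $y_+$ to leading order in $p$, using the explicit $P_0^\tau$ and $P_1^\tau$, is the only delicate computation; once $y_+ > y(0)$ is established, boundedness and hence convergence for that $x>1$ follow from the elementary phase-line analysis above.
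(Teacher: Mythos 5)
Your proposal is sound and reaches the lemma's conclusion, but by a genuinely different route from the paper. The paper fixes $x=1+p^3$ (so that all $x$-dependent corrections to \eqref{e:diffEqnFinalRPS} are $o(p^2)$), linearises the Riccati term via the ansatz $y(\tau)=1+O(p)$, and then \emph{solves} the resulting linear ODE \eqref{e:diffEqnYRPS2} explicitly by integrating factor and Taylor expansion in $p$, reading off boundedness from the closed-form solution $y(\tau)=1-3(1-e^{-\tau})p+O(p^2)$; notably, with that choice of $x$ the linear coefficient is $+2p>0$, so the linearised equation has a slowly growing mode $e^{2p\tau}$, which is why the paper must hedge with the condition $u=\Omega(1)$ (effectively restricting to $\tau=O(1/p)$). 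You instead keep the nonlinearity, choose $x=1+\Theta(p)$ so that the linear coefficient $B=\frac{2}{x}\bigl(1-x+x^2P_0^\tau p(1-p)\bigr)\approx 2p(1-c)$ is \emph{negative} of order $p$ while the quadratic coefficient $A=p^2x^2$ stays $O(p^2)$, and run a phase-line/barrier comparison: the repelling fixed point sits at $y_+\approx -B/A=\Theta(1/p)\gg y(0)=1$, so the trajectory is trapped and no finite-time blow-up can occur. This is arguably more robust than the paper's argument — it is uniform in $\tau$ and does not need the $u=\Omega(1)$ caveat — and it even yields a stronger tail bound $P_\ell^\tau<\gamma(1+\Theta(p))^{-\ell}$ in place of $(1+p^3)^{-\ell}$, which only strengthens Lemma~\ref{lem:nolongruns}. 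Two caveats. First, your sign claim $C=\Theta(p)>0$ needs the constant in $x=1+cp$ somewhat large (at $\tau=0$ one computes $C\approx(2c-5)p$, so $c>5/2$), though positivity of $C$ is not actually needed for boundedness above: with $A>0$ and $B<0$ the solution stays below $y_+$ regardless. Second, your worst-case comparison step implicitly uses monotonicity of the right-hand side in $(B,C)$, valid only for $y\ge 0$; this is a standard first-crossing technicality since the comparison solution stays near $1$. The one real cost of your route is downstream rather than in the lemma itself: the paper reuses the explicit $o(p^2)$-accurate formula for $y(\tau)$ verbatim in Lemma~\ref{lem:upperBoundsRPS} (equation \eqref{e:diffEqnGRPS2}) to conclude $\sum_{\ell\ge 3}P_\ell^\tau=o(p^2)$, and a pure barrier argument does not produce that expansion, so the paper would still need the explicit perturbative solution elsewhere even if it adopted your proof here.
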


\begin{proof}

Without loss of generality, let $x= 1+p^3$. Substituting this value into the differential equation
(\ref{e:diffEqnFinalRPS}) gives
\begin{equation}
\label{e:diffEqnYRPS1}
\dfrac{d y(\tau)}{d\tau} = p^2 y(\tau)^2 + 2y(\tau)P_0^\tau p(1-p) -2{P_0^\tau}^2p(1-p)-P_0^\tau(1+3p) -P_1^\tau + 1 + o(p^2)\ .
\end{equation}

Differential Equation \eqref{e:diffEqnYRPS1} is nonlinear. But, we can linearise this by assuming $y(\tau) = 1+O(p)$.
This assumption will be validated later. Under this assumption, the nonlinear term
\begin{equation*}
p^2 y(\tau)^2 = p^2 + o(p^2)\ .
\end{equation*}

Then, substituting the solutions for $P_0^\tau$ and $P_1^\tau$ from Lemma~\ref{lem:specificSolutionRPS} into  \eqref{e:diffEqnYRPS1} and simplifying, we get the following linear differential equation.
\begin{equation}
\label{e:diffEqnYRPS2}
\begin{split}
\dfrac{d y(\tau)}{d\tau} &-  \bigl( 2p - (10-2e^{-2\tau}-6e^{-\tau}) p^2\bigr) y ( \tau ) = ( -2-3{e^{-\tau}} ) p\ \\ & + ( 16-17 e^{-2\tau} + 4e^{-\tau}
  + 9e^{-\tau}\tau -4e^{-2\tau}\tau ) p^2 + o(p^2)\ .
 \end{split}
\end{equation}

This is a first order linear differential equation which could be solved by the method of integrating factor.
The integrating factor is
\begin{equation*}
\mu(\tau) = e^{\int \!-2p+ ( 10-2e^{-2\tau}-6e^{-\tau}
 ) p^2{d \tau}}
= e^{-2\tau p+ ( 10\tau+e^{-2\tau}+6e^{-\tau} ) p^2}\ .
\end{equation*}

Using Taylor approximations, we can approximate the integrating factor and its inverse to get

\begin{equation}
\label{e:integFactorApprox}
\mu(\tau)  = 1-2\tau p+ ( 10\tau+2{\tau}^2+6e^{-\tau} +e^{-2\tau}) p^
2+ o ( p^2 ),
\end{equation}
and
\begin{equation}
\label{e:invIntegFactorApprox}
\dfrac{1}{\mu(\tau)}  = 1+2\tau p - ( 10\tau-2{\tau}^2 +6e^{-\tau}+e^{-2\tau}) p^2+o ( p^2)\ .
\end{equation}

On multiplying (\ref{e:diffEqnYRPS2}) by $\mu(\tau)$, we obtain
\begin{equation*}
\dfrac{d (y(\tau)\mu(\tau))}{d\tau} = \mu(\tau)\bigl( ( -2-3{e^{-\tau}} ) p + ( 16-17e^{-2\tau}+ 4e^{-\tau} + 9e^{-\tau}\tau -4e^{-2\tau}\tau) p^2 + o(p^2)\bigr)\ .
\end{equation*}

By using the approximation of the integrating factor in \eqref{e:integFactorApprox},
the above equation could be simplified to
\begin{equation*}
\dfrac{d (y(\tau)\mu(\tau))}{d\tau} = ( -3 e^{-\tau}-2 ) p+ ( 16+4\tau
+15\tau e^{-\tau}-4 \tau e^{-2 \tau} -17e^{-2 \tau}+4 e^{-\tau} ) p^2 + o(p^2)\ .
\end{equation*}

Now let $u$ be defined by
\begin{equation*}
 u = ( 3e^{-\tau}+2 )- ( 16+4 \tau
+15\tau e^{-\tau}-4\tau e^{-2 \tau} -17e^{-2 \tau}+4 e^{-\tau} )p,
\end{equation*}
such that
\begin{equation*}
\dfrac{d (y(\tau)\mu(\tau))}{d\tau} = -pu + o(p^2)\ .
\end{equation*}

Let us now find the integral of $u$ which we will need
later.
\begin{equation*}
\int_0^\tau u\, d\tau = ( -3 e^{-\tau}+2\tau) - ( 16\tau+2{\tau}^2-15
\tau e^{-\tau}-19 e^{-\tau}+2 \tau{e^{-2\tau}} +\tfrac{19}{2} e^{-2 \tau} ) p\ .
\end{equation*}

Now, suppose $u = \Omega(1)$. Then we have
\begin{equation*}
\dfrac{d (y(\tau) \mu(\tau))}{d\tau} = -(1  + o(p))pu\ .
\end{equation*}

Integrating both sides of this equation, we get
\begin{equation}
\label{e:diffEqnG2}
\begin{split}
y(\tau)\mu(\tau) = &-(1  + o(p))\bigl(( -3e^{-\tau}+2\tau )p - ( 16\tau+2{\tau}^2-15
\tau e^{-\tau}-19e^{-\tau} \\
&+2 \tau e^{-2\tau}+\tfrac{19}{2}{e^{-2\tau}}) p^2 \bigr)+ C,
\end{split}
\end{equation}

where $C$ is an arbitrary constant. We can determine the value of $C$ using
the initial condition $y(0)=1$. Thus we have
\begin{equation*}
1+7p^2+o(p^2) = 3p-\tfrac{19}{2}p^2 +C\ .
\end{equation*}

Hence, the initial condition will be satisfied if $ C = 1-3p+{\tfrac {33}{2}}p^2+o(p^2)$.
Substituting this value and (\ref{e:invIntegFactorApprox}) into (\ref{e:diffEqnG2}) and simplifying
using Taylor approximations, we get the following solution.
\begin{equation*}
y(\tau) = 1-3 ( 1 -e^{-\tau}) p +  \left( 2e^{-2\tau}\tau+\tfrac{17}{2}e^{-2\tau}-9e^{-\tau}\tau-25e^{-\tau}+
\tfrac {33}{2} \right) p^2 + o  (p^2)\ .
\end{equation*}

We can therefore conclude that while $u=\Omega(1)$, $y(\tau)$ cannot deviate much from the above solution. The solution is bounded above as required.
 It is easily verified that this solution agrees with our assumption that $y(\tau) = 1+O(p)$. Since our assumption
 is valid initially, i.e.\ $y(\tau=0) = 1$,  the solution is valid for all $\tau$. The lemma is proved.
\end{proof}
We have just proved that the generating function  $F(x,t)$ converges when $x=1+p^3$. Before looking at the subsequent
results, let us validate an assumption made in Lemma~\ref{lem:specificSolutionRPS} that $P_3^\tau = o(p^2)$.

\begin{lemma}
\label{lem:upperBoundsRPS}
The assumption that $P_3^\tau = o(p^2)$ is valid. In fact, we have
$P_\ell^\tau = o(p^2)$ for all $\ell \ge 3$.
\end{lemma}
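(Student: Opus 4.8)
The direct differential equations for the $P_\ell^\tau$ with $\ell\ge 3$, as in (\ref{e:probofEllRunRPS1}), are awkward to bound one index at a time, because the equation for $P_\ell^\tau$ couples \emph{upward} to $P_{\ell+1}^\tau$, so a naive induction on $\ell$ never closes. The plan is therefore to control the whole hierarchy at once through the generating function $F(x,\tau)=\sum_{\ell\ge0}P_\ell^\tau x^\ell$, using the fact established in Lemma~\ref{lem:generatingfunctionRPS} that it converges at $x=1+p^3>1$ with value $y(\tau)=F(1+p^3,\tau)=1+O(p)$. The crucial structural observation is that every $P_\ell^\tau\ge 0$ (each is, or upper-bounds, a probability, and the defining recurrences have nonnegative coefficients and seeds). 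Since $(1+p^3)^\ell\ge1$, any fixed term of the series is then dominated by the whole tail:
\begin{equation*}
P_\ell^\tau\ \le\ P_\ell^\tau(1+p^3)^\ell\ \le\ \sum_{m\ge 3}P_m^\tau(1+p^3)^m\ =\ F(1+p^3,\tau)-P_0^\tau-P_1^\tau(1+p^3)-P_2^\tau(1+p^3)^2 .
\end{equation*}
Thus it suffices to prove that this truncated tail is $o(p^2)$, a bound that is automatically uniform in $\ell\ge3$.

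The next step is to evaluate that right-hand side to order $p^2$. Because $P_1^\tau=O(p)$ and $P_2^\tau=O(p^2)$, the corrections arising from using $x=1+p^3$ rather than $x=1$ are $P_1^\tau\,p^3=O(p^4)$ and $P_2^\tau(2p^3+p^6)=O(p^5)$, both $o(p^2)$; hence the tail equals $y(\tau)-P_0^\tau-P_1^\tau-P_2^\tau+o(p^2)$. I would then substitute the closed forms for $P_0^\tau,P_1^\tau,P_2^\tau$ from Lemma~\ref{lem:specificSolutionRPS} and for $y(\tau)$ from Lemma~\ref{lem:generatingfunctionRPS}, and collect the coefficients of $p^0$, $p$ and $p^2$, tracking the functions $1,\,e^{-\tau},\,e^{-2\tau},\,\tau e^{-\tau},\,\tau e^{-2\tau},\,\tau^2 e^{-2\tau}$ separately. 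What makes the lemma true — and what I expect after the bookkeeping — is that all three coefficient groups cancel identically in $\tau$, leaving $y(\tau)-P_0^\tau-P_1^\tau-P_2^\tau=o(p^2)$.

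Combining the two facts gives $P_\ell^\tau\le o(p^2)$ for every $\ell\ge3$, which is exactly the claim; since the bound on the tail does not depend on $\ell$, it covers all $\ell\ge3$ at once and in particular validates the assumption $P_3^\tau=o(p^2)$ borrowed in Lemma~\ref{lem:specificSolutionRPS}. The one genuine obstacle is the cancellation described above: it is a routine but delicate check that the degree-$0$, degree-$1$ and degree-$2$ parts of $y(\tau)-P_0^\tau-P_1^\tau-P_2^\tau$ each vanish, and it is the only place where the explicit constants from Lemmas~\ref{lem:specificSolutionRPS} and~\ref{lem:generatingfunctionRPS} are actually used. Conceptually this is to be expected, since $P_0^\tau+P_1^\tau+P_2^\tau$ already accounts for the value of the series up to $o(p^2)$, so essentially none of its mass can reside in runs of length three or more; once the cancellation is confirmed, the nonnegativity argument closes the proof immediately.
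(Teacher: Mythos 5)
Your proposal is correct, and its core is the same as the paper's: both reduce the lemma to the identity $y(\tau)=P_0^\tau+P_1^\tau+P_2^\tau+o(p^2)$, checked by comparing the explicit expansions of Lemmas~\ref{lem:specificSolutionRPS} and~\ref{lem:generatingfunctionRPS} (the cancellation you anticipate does hold: for instance the constant coefficients of $p^2$ give $\tfrac{37}{2}-\tfrac{7}{2}+\tfrac{3}{2}=\tfrac{33}{2}$, the $e^{-\tau}$ coefficients give $-31+6=-25$, and the $e^{-2\tau}\tau^2$ terms cancel outright). Where you genuinely diverge is in how the tail $\sum_{\ell\ge3}P_\ell^\tau$ is tied to $y(\tau)$. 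The paper introduces $g(\tau)=F(1,t)$, observes that its differential equation \eqref{e:diffEqnGRPS} coincides with the equation \eqref{e:diffEqnYRPS1} for $y(\tau)=F(1+p^3,t)$ up to $o(p^2)$, and concludes that $g$ has the same expansion as $y$; the tail is then the exact difference $g(\tau)-\bigl(P_0^\tau+P_1^\tau+P_2^\tau\bigr)$. You skip $g$ altogether: nonnegativity of the $P_\ell^\tau$ together with $(1+p^3)^\ell\ge1$ gives $P_\ell^\tau\le y(\tau)-P_0^\tau-P_1^\tau(1+p^3)-P_2^\tau(1+p^3)^2$, and the $O(p^4)$ corrections from evaluating $P_1^\tau,P_2^\tau$ at $1+p^3$ rather than $1$ are harmless. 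This buys a shorter argument (no second ODE has to be matched against the first) and a bound that is uniform in $\ell\ge3$ by construction; the cost is that nonnegativity of the $P_\ell^\tau$ becomes an explicit ingredient, though it follows easily from the recurrences (e.g.\ \eqref{e:probofEllRunRPS1} expresses $P_\ell^{t+1}$ as $(1-\tfrac{2}{n})P_\ell^t$ plus nonnegative terms), and the paper needs it implicitly anyway to pass from $\sum_{\ell\ge3}P_\ell^\tau=o(p^2)$ to the individual $P_\ell^\tau$.
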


\begin{proof}
From Lemma~\ref{lem:specificSolutionRPS} we have
\begin{equation}
\label{e:shortRunsSumRPS}
P_0^\tau + P^\tau_1 + P^\tau_2 = 1-3( 1 -e^{-\tau} ) p +  \big( 2e^{-2\tau}\tau+\tfrac {17}{2}e^{-2\tau}-9e^{-\tau}\tau -25e^{-\tau}+ \tfrac {33}{2} \big) p^2 + o ( p^2)\ .
\end{equation}

Now let $g(\tau) = F(1,t)$.  Then, from (\ref{e:diffEqnFinalRPS1}), we obtain
\begin{equation}
\label{e:diffEqnGRPS}
\dfrac{d g(\tau)}{d\tau} = p^2g(\tau)^2 + 2g(\tau)P_0^\tau p(1-p) -2p{P_0^\tau}^2(1-p)-P_0^\tau(1+3p) -P_1^\tau + 1\ .
\end{equation}

Note that, by definition,  $g(\tau)$ is equal to the sum of the probability bounds $P_{\ell}^{\tau}$. Now comparing the equations
(\ref{e:diffEqnGRPS}) and (\ref{e:diffEqnYRPS1}) reveals that both $g(\tau)$ and $y(\tau)$ are identical except some error terms
in $o(p^2)$. It is then readily verified that the solution for $g(\tau)$ will be identical to $y(\tau)$. Hence, from
Lemma~\ref{lem:generatingfunctionRPS}, we have
\begin{equation}
\label{e:diffEqnGRPS2}
g(\tau) =1-3( 1 -e^{-\tau} ) p +  \left( 2e^{-2\tau}\tau+\tfrac {17}{2} e^{-2\tau}-9e^{-\tau}\tau-25e^{-\tau}+
\tfrac {33}{2} \right) p^2 + o (p^2)\ .
\end{equation}

Notice that both (\ref{e:shortRunsSumRPS}) and (\ref{e:diffEqnGRPS2}) have the functions of the same order
on the right hand side. Hence, the additional terms that are missing on the left hand side in \eqref{e:shortRunsSumRPS} must be of the order $o(p^2)$. That is,
\begin{equation*}
\sum_{\ell \ge 3} P^\tau_\ell = o(p^2),
\end{equation*}
proving the Lemma.
\end{proof}

In Lemma~\ref{lem:generatingfunctionRPS},  we proved that the generating function $F(x,t)$ converges when $x= 1+p^3$. Hence,
if $\ell$ is sufficiently large, the following holds.
\begin{equation*}
P_\ell^\tau x^{\ell} < 1 ,  \mbox{ i.e.\ } P_{\ell}^\tau < \tfrac{1}{(1+ p^3)^{\ell}}\ .
\end{equation*}

Otherwise, there is an infinite sequence with $P_\ell^\tau \ge \tfrac{1}{x^{\ell}} $, which contributes an infinite amount to the sum,
contradicting the lemma. Thus, for some constant $\gamma > 0$, we have
\begin{equation}
\label{e:decreasingTerms}
P_{\ell}^{\tau} < \tfrac{\gamma}{(1+ p^3)^{\ell}},
\end{equation}
for all $\ell$. Using this result, the following lemma proves that it takes exponential time before a plus-run of length $\Omega(n)$ can
be formed on the cycle.

\begin{lemma}
\label{lem:nolongruns}
The following statement fails with probability exponentially small in n:
if the game is started with all minuses on the cycle, it would take exponential
time before a plus-run of length $n/4$ or longer can be created.
\end{lemma}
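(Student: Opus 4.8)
The plan is a two-level union bound: first over the $n$ possible starting positions of a long plus-run, and then over the (exponentially many) time steps, using the uniform geometric tail bound \eqref{e:decreasingTerms} as the per-position, per-time input. The whole point of having proved \eqref{e:decreasingTerms} uniformly in $\tau$ is that it gives a bound that does not degrade as time grows, which is what makes a union bound over an exponential time window affordable.

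Fix a time step $t$ and let $\tau = t/n$. A plus-run of length at least $n/4$ exists at time $t$ precisely when $\mathcal{R}^i_\ell$ holds for some position $i$ and some $\ell \ge n/4$. Using $\Pr(\mathcal{R}^i_\ell)\le P_\ell^\tau$ and a union bound over the $n$ symmetric positions, together with \eqref{e:decreasingTerms}, I would write
\begin{equation*}
\Pr\bigl(\text{a run of length} \ge n/4 \text{ exists at time } t\bigr)\ \le\ n\sum_{\ell \ge n/4} P_\ell^\tau\ \le\ n\sum_{\ell \ge n/4}\frac{\gamma}{(1+p^3)^{\ell}}.
\end{equation*}
The geometric tail sums to $\gamma\,\tfrac{1+p^3}{p^3}\,(1+p^3)^{-n/4}$, so setting $\kappa = \tfrac14\log(1+p^3)$ and $C = \gamma(1+p^3)/p^3$ gives a per-step bound of $C\,n\,e^{-\kappa n}$. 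Since $p$ is a fixed positive constant, $\kappa>0$ is a constant and this quantity is exponentially small in $n$.

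It remains to union over time. Let $T = e^{\beta n}$ for a constant $\beta$ with $0<\beta<\kappa$, and let $T^\star$ be the first step at which some run attains length $n/4$. The derivation of the $P_\ell^\tau$, and hence of \eqref{e:decreasingTerms}, rests on the conditional-independence property \eqref{e:conditional2}, which requires at least three plus-runs on the cycle. This holds as long as no run of length $\ge n/4$ has formed: all pluses then lie in runs shorter than $n/4$, so (starting from the all-minus configuration, which presents $n$ empty runs) there are always at least three plus-runs. Consequently, conditioned on the event that no long run exists before step $t$, the probability that one first appears at step $t$ is bounded by the per-step estimate $C n e^{-\kappa n}$, and a union bound over $t=1,\dots,T$ yields
\begin{equation*}
\Pr\bigl(T^\star \le T\bigr)\ \le\ T\cdot C\,n\,e^{-\kappa n}\ =\ C\,n\,e^{-(\kappa-\beta)n},
\end{equation*}
which is exponentially small in $n$ because $\beta<\kappa$. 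Thus, except with probability $e^{-\Theta(n)}$, no plus-run of length $n/4$ or longer is created during the exponentially long window of $e^{\beta n}$ steps, which is exactly the claim.

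The main obstacle is the self-referential character of the estimate: the probability bounds being invoked were themselves derived under the hypothesis that no single run dominates the cycle, which is the very conclusion we are after. The stopping-time formulation above is what resolves this circularity, since the bounds are only ever applied up to $T^\star$, where \eqref{e:conditional2} is guaranteed to hold. The remaining quantitative check is that the decay rate $\kappa = \Theta(p^3)$ strictly exceeds the chosen time-exponent $\beta$; this is what both certifies the exponentially small failure probability and pins down the exponential time bound $e^{\beta n}$, albeit with a rate that is small and shrinks with $p$.
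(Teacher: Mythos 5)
Your proof is correct and takes essentially the same route as the paper's: a union bound over the $n$ positions and over the time steps, fed by the geometric decay bound \eqref{e:decreasingTerms} that comes from the convergence of the generating function. Your version is in fact slightly more careful than the paper's own proof, which bounds only $P_{n/4}^\tau$ and dismisses longer runs with ``longer runs require even longer time'' and leaves the at-least-three-runs hypothesis behind \eqref{e:conditional2} implicit --- whereas your tail sum over all $\ell \ge n/4$ cleanly covers runs that jump past length $n/4$ by merging, and your stopping-time formulation makes explicit why the bounds may legitimately be applied up to the first appearance of a long run.
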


\begin{proof}

By definition, probability that a run of length $n/4$ starts at position $i$
at a given time $\tau$ is at most $P_{n/4}^\tau$. As the game is started with a symmetrical configuration (i.e.\ all-minuses) the
result at any time will be symmetrical too. Hence the probability that such a run exists at any position on the cycle's $n$
positions at a given time $\tau$ is equal to $nP_{n/4}^\tau$. Finally, the probability of finding such a run at any
position on the cycle at any time within $T$ steps is at most
\begin{equation*}
T n  P_{n/4}^\tau\ .
\end{equation*}

It has already been shown in \eqref{e:decreasingTerms} that, when $\ell$ is sufficiently large,
\begin{equation*}
P_{\ell} < \tfrac{\gamma}{(1+ p^3)^{\ell}}\ .
\end{equation*}

Hence the probability that a run of length $n/4$ is created in $T$ steps is at most
\begin{equation*}
T n P_{n/4}^\tau \le  T n \tfrac{\gamma}{(1+ p^3)^{n/4}}\ .
\end{equation*}
This probability is exponentially small whenever $T$ is polynomially bounded.
In other words, $T$ has to be exponentially large before a run of length $n/4$ can
appear on the cycle. Clearly, longer runs require even longer time, proving the lemma.
\end{proof}
\medskip
\begin{remark}
As mentioned earlier, the discretisation error is $O(e^{-1/p})$.
However, the analysis above has error terms $o(p^2)$. Thus, for small enough $p$,
the former is insignificant.
\end{remark}

\bigskip
\noindent \textbf{
Proof of Theorem~\ref{thm:slowConRP}: } For the game to converge to all-cooperation, at some point in time,
there must be a plus-run of length $n/4$ or longer. The result then follows from Lemma~\ref{lem:nolongruns}.

As the error in the analysis is $o(p^2)$, the value for $p$ should be small enough so that $o(p^2)$ terms
can be ignored. This completes the proof.
\hfill $\square$

\medskip
\begin{remark}
SRP also shows behaviour similar to RP for small enough $p$.
That is, we can prove that there exists a small enough $p$
for which it takes exponential time for the evolution of
cooperation for SRP. The same approach as the one used for RP
can be used here. We performed the analysis in this way and found
that it is predictably much simpler.
\end{remark}

\section{Emergence of defection}
\label{s:defection}

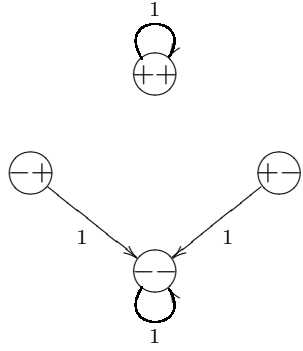
\begin{figure}
\centering
\[\xymatrix{
&*+[o][F-] {++}\ar@(ul,ur)[]^{1} & \\
*+[o][F-]{-+} \ar[dr]_{1} & & *+[o][F-]{+-}\ar[dl]^{1}  \\
&  *+[o][F-]{--} \ar@(dl,dr)[]_{1}&  } \]
\caption[RP and SRP when $p=0.$]{The transition diagram of RP and SRP when $p=0.$}
\label{fig:pavlovstrats}
\end{figure}

The case where $p=0$ is easy to analyse because $p=0$ implies that there is no randomness in the strategies.
As mentioned before, both RP and SRP are equivalent when $p=0$, thus the same analysis applies to both strategies.
The transition diagram of the resultant strategy is shown in Figure~\ref{fig:pavlovstrats}. Clearly, the process converges to
all-minuses state if there are any minuses on the cycle at the beginning of the game. Theorem~\ref{thm:defection} computes the time it takes for this.

\bigskip
\noindent \textbf{
Proof of Theorem~\ref{thm:defection}: }
It is easy to check that it will take the longest to reach the absorbing state (all-minuses state) if there is only one minus,
i.e.\ a singleton, on the cycle at the beginning of the game. Therefore, we can use this setting as the initial configuration
for the worst case analysis.

Note that at each step of the game, the probability of spreading minus to a neighbour is $2/n$. Let $T_i$ denote the number
of steps it takes to go from $i$-minuses to $(i+1)$-minuses on the cycle. Thus we have
\begin{equation*}
\Pr(T_i = t) = \left(1 - \dfrac{2}{n}\right)^{(t-1)} \dfrac{2}{n}\ .
\end{equation*}
Clearly, $T_i$ has geometric distribution with probability of success
$2/n$. Therefore $\E[T_i] = n/2$. Hence we get
\begin{equation*}
\E[T] = \sum_{i=1}^{n-1} \E[T_i] =  \sum_{i=1}^{n-1} \dfrac{n}{2} = \dfrac{n(n-1)}{2}\ .
\end{equation*}

Let us now get a bound on the probability of getting large deviations from the mean $\E[T]$. Let $X^t$ denote the event
that the number of minuses was not increased in the first $t$ trials. Then,
\begin{equation*}
\Pr(X^t) = \biggl( 1 - \frac{2}{n} \biggr)^t \le e^{\frac{-2t}{n}}\ .
\end{equation*}
If $t = \beta n \log n/2$, then
$\Pr(X^t) \le e^{\frac{- \beta n \log n}{n}} = \frac{1}{n^{\beta}}.$  But we know from the definition of $X^t$ that
\begin{equation*}
\Pr\biggl(T_i > \dfrac{\beta n \log n}{2}\biggr) \le \Pr(X^t)  \le \frac{1}{n^{\beta}}\ .
\end{equation*}
Thus, deviations of size $\frac{\beta n \log n}{2}$ are unlikely. In other words, $T_i$ lies within the range
$\left[0, \frac{\beta n \log n}{2}   \right]$ with high probability. Now, define a set of random variables $Y_i$
such that $Y_i = \frac{2 T_i}{\beta n \log n}$. Then, $Y_i \in [0, 1]$ with high probability. Also, we have
\begin{equation*}
\E[Y] = \frac{2 \E[T]}{ \beta n \log n} = \frac{n-1}{\beta \log n}\ .
\end{equation*}
As $Y_1, Y_2, \ldots , Y_n$ are independent random variables taking values in [0,1], we can apply Chernoff bound to get
\begin{equation*}
\Pr\bigl(Y \notin [(1 \pm \varepsilon) \E[Y]] \bigr) \le 2  e^{- \frac{1}{3} \varepsilon^2 \frac{n-1}{\beta \log n}}\ .
\end{equation*}
If $\varepsilon = \frac{3 \beta \log n}{\sqrt{ n-1}}$, the following holds.
\begin{equation*}
\Pr\left(Y \notin [(1 \pm \varepsilon)\E[Y]] \right) \le 2 e^{- 3 \beta \log n} = \dfrac{2}{n^{3 \beta}}\ .
\end{equation*}

It  follows immediately that $T$ lies within the range  $[(1 \pm \varepsilon) \E[T]]$ with high probability.
Thus we can conclude that $T \in \left[ \frac{n(n-1)}{2} \pm   O(n^{\frac{3}{2}} \log n) \right]$ with high
probability.
\hfill $\square$

\section{Experimental results} \label{s:simulation}

Theorem~\ref{thm:fastconvRP} proves that cooperation emerges fast when $p$ is high,
and Theorem~\ref{thm:slowConRP} shows that cooperation emerges
exponentially slowly when $p$ is small enough. As it is not clear what happens for $p$ between
these two ranges, we carried out an empirical study. The results
of this study are presented in this section.

\subsection{Simulation model}
\label{s:simulationsimulator}

The experimental results presented in this paper were obtained from  a computer program which we developed to simulate
the IPD game played by agents arranged as the vertices of a cycle. This
program takes the length of the cycle and a value for $p$ as the input parameters and plays the game until cooperation
emerges or the number of iteration reaches a predefined maximum, whichever happens first. The maximum number of iteration
attempted is $43 \times 10^6$. At each step of the game, an edge is chosen uniformly at random and the game is played by
the associated agents based on RP. Experiments were performed in a homogeneous setting where
all players on the cycle adopt the same strategy. In our experiments, the game was started with all players playing defect.

When all agents on the cycle play RP, the time taken for reaching cooperation was measured in terms of the number of steps
required and plotted against the values of $p$ in Figure~\ref{fig:timingCurves}. For the cases where the all-cooperate state was not reached in
$43 \times 10^6$ steps, the number of cooperators were counted before abandoning the game and plotted against $p$
in Figure~\ref{fig:coopFraction}.
Each data point in the graphs represents an average value of 100 repetitions.

\subsection{Observations}
\label{s:simulationobservation}

Figure~\ref{fig:timingCurves} suggests that the absorption time decreases as $p$ increases,
which is to be expected from the definition of the strategy. The results also
support our theoretical results that cooperation
emerges quite fast for high $p$, and takes a very long time for low $p$.
However there is a large gap between the minimum value of $p$ that we proved
to give fast convergence and the lowest $p$ having
relatively faster convergence. To be more precise,
Figure~\ref{fig:timingCurves} shows that the absorption time
increases rapidly when $p$ is in the region $0.5-0.6$. In
other words, the convergence is relatively much faster when $p$
is greater than $0.6$. Theorem~\ref{thm:fastconvRP}
however, rigorously proves the fast convergence for RP only
when $p\ge0.870$.

\begin{figure}
\centering
\subfigure[Timing curves for varying $n$, for higher $p$.]{
\label{fig:timingCurves}
\includegraphics[scale=0.22, angle = -90]{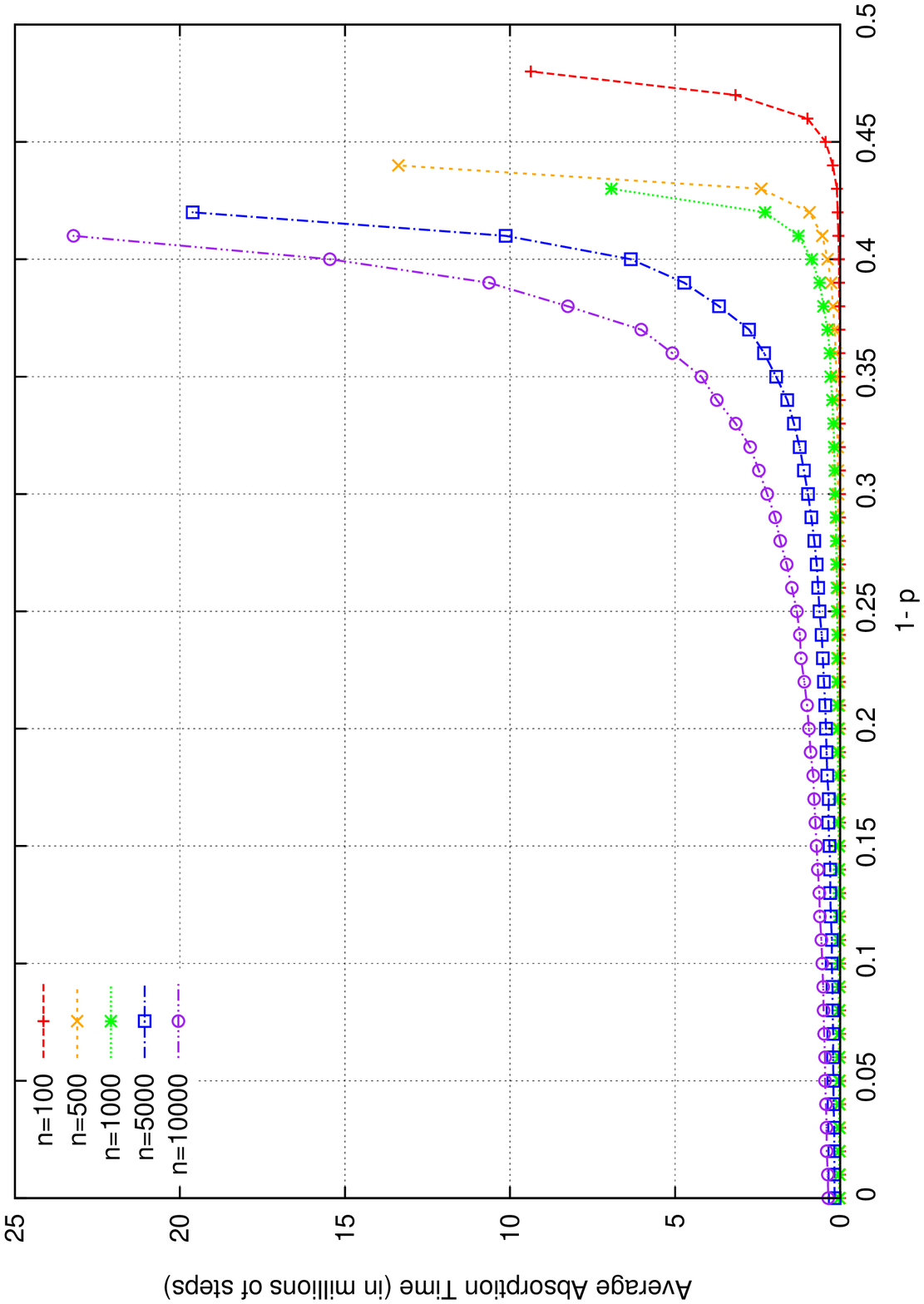}
}
\hspace{1cm}
\subfigure[Graph showing the fraction of cooperators on the
cycle after  $43 \times 10^6$ steps, for smaller $p$.]{
\label{fig:coopFraction}
\includegraphics[scale=0.22, angle = -90]{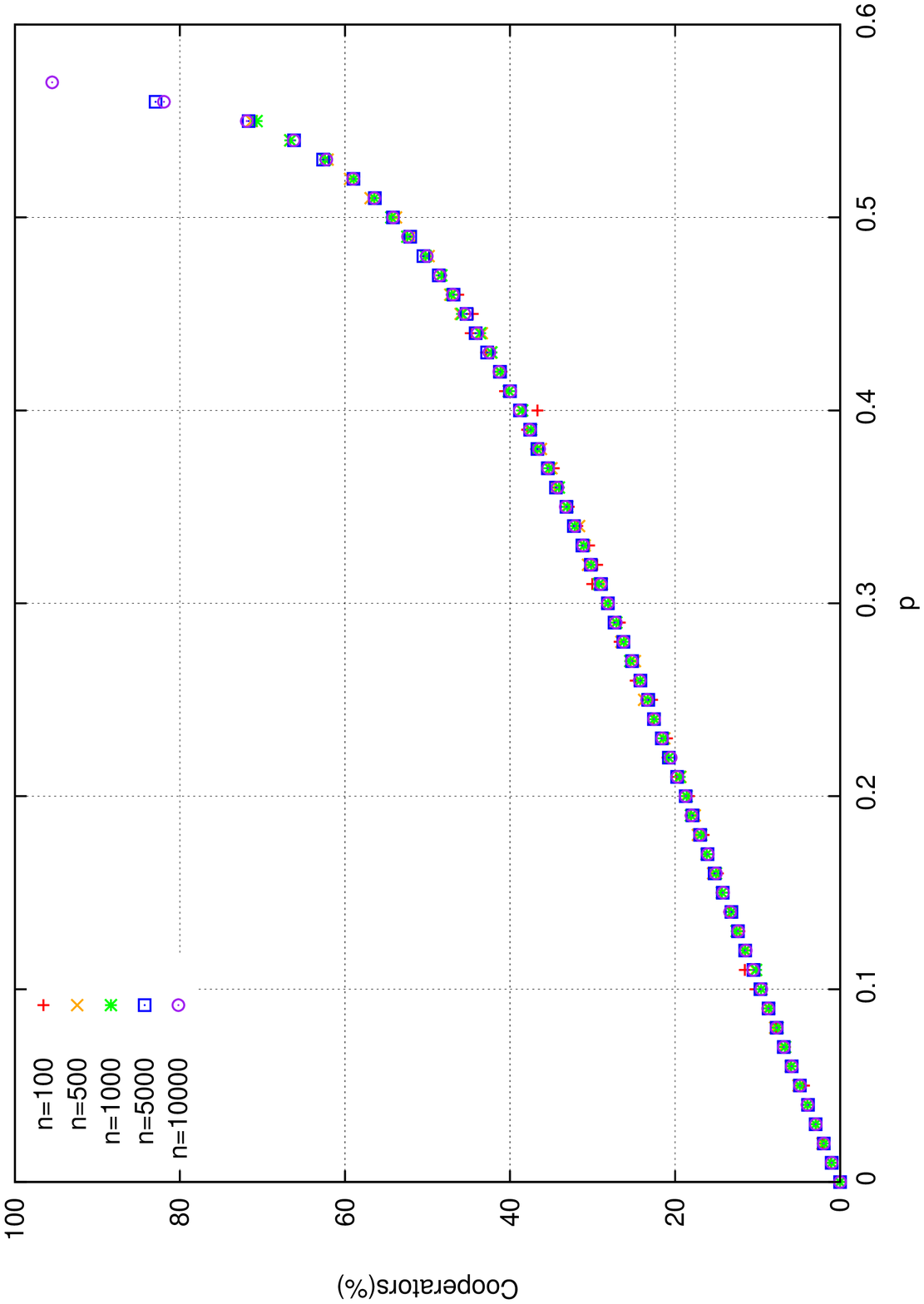}
}
\caption[Simulation results.]{Simulation results for RP when applied to the IPD on cycle with $n$ vertices.}
\label{fig:experimentalResultsRP}
\end{figure}

For small values of $p$, the emergence of cooperation took so long that we could not reliably measure the time.
This substantiates our theoretical result that it takes exponential time for cooperation to emerge for small values
of $p$. Interestingly, in Figure~\ref{fig:coopFraction}, the proportion of the cooperators is seemingly about $p$.
This can be explained intuitively as follows. When the game starts, all agents are defectors. Thereafter,
every one of them decides to cooperate with probability $p$. These are exactly the ones we will see for
smaller $p$, since their decision to cooperate will not lead to others cooperating.

In summary, the absorption time is exponentially large when $p$ is in
the region $0-0.5$. This drops considerably in the region $0.5-0.6$
and is relatively small when $p$ is greater than $0.6$.  The results suggest
that there is a sharp ``phase transition" in the region $0.5-0.6$.

\medskip
\begin{remark}
We carried out simulations for SRP as well, but the results are not included in this paper.
The results obtained are quite similar to the results presented above
for RP.
The main difference is that the apparent phase transition happens when $p$ is in
the range $0.3-0.4$ for SRP  whereas it happens in the range $0.5-0.6$ for RP.
\end{remark}

 \section{Conclusions and open problems}
\label{s:conclusion}

We have proposed randomised improvements to the Pavlov strategy for the multiplayer Iterated Prisoner's Dilemma game.
This gives two new strategies called RP (Rational Pavlov) and SRP (Simplified Rational Pavlov) with a parameter $p$.
We have studied the rate of convergence of these strategies both rigorously and experimentally
when used on the cycle for playing the IPD.
We have presented a complete analysis for RP and briefly remarked upon
similar results we obtained for SRP.

Since a rational player would choose to minimise risk without affecting long term return,
a player playing RP or SRP should choose the lowest possible $p$ that guarantees fast convergence to cooperation. Our results
provide evidence (both theoretical and empirical) that players can safely choose
$p=0.870$ for RP and $p=0.699$ for SRP, and still achieve fast cooperation. We have also shown that cooperation
emerges exponentially slow when $p$ is small enough and defection emerges
(fast) when $p = 0$, for both strategies. It is not clear what happens for intermediate $p$.
Simulation results suggest that there is a sharp phase transition in this range.

It remains as an open question whether the phase transition can be proved rigorously.
Two other interesting open questions are: whether this process can be analysed
on graphs other than cycles, and whether there are graphs with average degree greater
than 2 where fast convergence to cooperation   for RP and SRP occurs for any $p$.
\bibliographystyle{amsplain}
\bibliography{Bibs2.1}

\end{document}